\newcommand{\ThetaL}{\mathbf{\Theta}}
\newcommand{\limK}{\lim\limits_{K\rightarrow\infty}}
\begin{document}

\title{Opportunistic Buffered Decode-Wait-and-Forward (OBDWF) Protocol for Mobile Wireless Relay Networks}

\newtheorem{Problem}{Problem}
\newtheorem{Algorithm}{Algorithm}
\newtheorem{Subproblem}{Subproblem}
\newtheorem{Protocol}{Protocol}
\newtheorem{Proposition}{Proposition}
\newtheorem{Lemma}{Lemma}
\newtheorem{Theorem}{Theorem}
\newtheorem{Def}{Definition}
\newtheorem{Corollary}{Corollary}
\newtheorem{Scheme}{Scheme}
\newtheorem{Baseline}{Baseline}
\newtheorem{Rem}{Remark}
\newtheorem{Assumption}{Assumption}

\author{ \authorblockN{Rui Wang,  Vincent K. N. Lau, {\em Senior Member, IEEE}, and Huang Huang, {\em Student Member, IEEE}}
\thanks{The authors are with
the Department of Electronic and Computer Engineering (ECE), The
Hong Kong University of Science and Technology (HKUST), Hong Kong.
(email:  ray.wang.rui@gmail.com, eeknlau@ee.ust.hk,huang@ust.hk).
This work was supported by the Research Grants Council of the Hong
Kong Government through the grant RGC 615609.} }

\markboth{To be appeared in IEEE Trans. Wireless Commun.} {Shell
\MakeLowercase{\textit{et al.}}: Bare Demo of IEEEtran.cls for
Journals}

\maketitle

\begin{abstract}
In this paper, we propose an opportunistic buffered
decode-wait-and-forward (OBDWF) protocol to exploit both {\em relay
buffering} and {\em relay mobility} to enhance the system throughput
and the end-to-end packet delay under bursty arrivals. We consider a
point-to-point communication link assisted by $K$ mobile relays. We
illustrate that the OBDWF protocol could achieve a better throughput
and delay performance compared with existing baseline systems such
as the conventional dynamic decode-and-forward (DDF) and
amplified-and-forward (AF) protocol. In addition to simulation
performance, we also derived closed-form asymptotic throughput and
delay expressions of the OBDWF protocol. Specifically, the proposed
OBDWF protocol achieves an asymptotic throughput $\ThetaL(\log_2 K)$
with $\ThetaL(1)$ total transmit power in the relay network. This is
a significant gain compared with the best known performance in
conventional protocols ($\ThetaL(\log_2 K)$ throughput with
$\ThetaL(K)$ total transmit power). With bursty arrivals, we show
that both the stability region and average delay of the proposed
OBDWF protocol can achieve order-wise performance gain $\ThetaL(K)$
compared with conventional DDF protocol.

\end{abstract}

\begin{keywords}
relay networks, mobile relays, opportunistic buffered
decode-wait-and-forward (OBDWF) protocol, queueing theory
\end{keywords}

\section{Introduction}\label{sec:intro}

In wireless communication networks, cooperative relaying not only
extends the coverage but also contributes the \textit{spatial
diversity}. As a result, cooperative relay is one of the core
technology components in the next generation wireless systems such
as IEEE 802.16m and LTE-A. There are extensive studies on the theory
and algorithm design of cooperative relay, { and they can be roughly
classified as focusing on the {\em microscopic} and {\em
macroscopic} aspects.} Examples of microscopic studies include the
decode-and-forward (DF), amplify-and-forward (AF) and
compress-and-forward (CF) protocols for single hop
\cite{Kramer:05,Relay:Tse:2004,DDF:2003} as well as multi-hop
cooperative relay networks \cite{relay:book:2006}. In
\cite{AF:DoF:2009}, the authors demonstrated that AF could achieve
optimal end-to-end DoF for the MIMO point-to-point system with
multiple relay nodes. In \cite{CF:Cap:2009}, it is shown that a
variant of the CF relaying achieves the capacity of any general
multi-antenna Gaussian relay network within a constant number of
bit. In \cite{Relay:Sel:08}, relay selection protocol is shown to
achieve higher bandwidth efficiency while guaranteeing the same
diversity order as that of the conventional cooperative protocols.
However, all these works have focused on the physical layer
performance (such as throughput) and failed to exploit the buffer
dynamics in the relay. Furthermore, they have assumed all the relays
are static and have ignored the potential benefit introduced by
mobility in the network.  On the other hand, there are also some
papers focusing on studying the macroscopic behavior of cooperative
ad-hoc networks. For example, the scaling law of the wireless ad-hoc
network is derived in
\cite{Gupta:00,GamalAE:04,Toumpis:04,Kulkarni:04} and it is shown
that each node can achieve the throughput of the order
$O(\frac{1}{\sqrt{K\log_2 K}})$\footnote{$f(K)=O(g(K))$ means that
there exists a constant $C$ such that $f(K)\leq Cg(K)$ for
sufficiently large $K$, $f(K)=o(g(K))$ means that $\limK
\frac{f(K)}{g(K)}=0$, and $f(K)=\ThetaL(g(K))$ means that
$f(K)=O(g(K))$ and $g(K)=O(f(K))$.} when $K$ fixed nodes are
randomly distributed over a unit area. These results imply that the
throughput of each node converges to zero when the number of nodes
increases. Nevertheless, it is found in \cite{Ozgur:07} that the
per-node throughput can arbitrarily close to constant by
hierarchical cooperation. In \cite{Gastpar:02b,Gastpar:05}, it's
shown that the source-destination throughput can scale as
$\ThetaL(\log_2 K)$ when all the relays amplify and forward the
received packet to the destination cooperatively with $\ThetaL(K)$
total transmission power. In \cite{Grossglauser:02}, the authors
have shown that a per link throughput of $\ThetaL(1)$ can be
achieved at the expense of potentially large delay when the nodes
are mobile. All these works have suggested that there are potential
advantage of relay buffering and relay mobility. However, there are
also various technical challenges to be addressed before we could
better understand the potential benefits.
\begin{itemize}
\item {\bf Low Complexity Relay Protocol Design Exploiting Relay Buffering and
Relay Mobility:} Although the idea of utilizing the mobility has
been studied in \cite{Grossglauser:02,Thrput-delay:ad-hoc} in the
study of ad-hoc network throughput analysis (scaling laws), there is
not much work that addresses the microscopic details (such as
protocol design) of the problem. For example, most of the existing
relay protocols have focused entirely on the physical layer
performance (information theoretical capacity or Degrees-of-Freedom
(DoF)) and they did not fully exploit the potential of relay
buffering. In fact, it is quite challenging to design low complexity
relay protocol that could exploit both the relay buffering and relay
mobility. Furthermore, the issue is further complicated by the
bursty source arrival and randomly coupled queue dynamics in the
systems.

\item {\bf Performance Analysis:} It is very important to have closed form
performance analysis to obtain insights to understand the
fundamental tradeoff between throughput gain and delay penalty in
cooperative systems. However, it is very challenging to analyze
closed form tradeoff between the throughput, stability region and
end-to-end delay. For instance, most of the existing papers studying
delay and throughput scaling laws in ad-hoc network
\cite{Grossglauser:02,Thrput-delay:ad-hoc} are focused on the
macroscopic aspects of the systems and they have ignored the
microscopic details such as the random bursty arrivals and queue
dynamics in the systems. When these dynamics are taken into
consideration, the problem involves both information theory (to
model the physical layer dynamics) and the queueing theory (to model
the queue dynamics), which is highly non-trivial.

\end{itemize}

In this paper, we shall propose an opportunistic buffered
decode-wait-and-forward (OBDWF) protocol to exploit both {\em relay
buffering} and {\em relay mobility} to enhance the system throughput
and the end-to-end packet delay under bursty arrivals. We consider a
point-to-point communication link assisted by $K$ mobile relays. By
exploiting the {\em relay buffering} and {\em relay mobility} in the
phase I and phase II of the proposed OBDWF, we shall illustrate that
the OBDWF could achieve a better throughput and delay performance
compared with existing baseline systems such as the conventional
Dynamic Decode-and-Forward (DDF) \cite{DDF:2003} and AF protocol
\cite{AF:sel}. In addition to simulation performance, we also
derived closed-form asymptotic throughput and delay expressions of
the OBDWF protocol. { Under random bursty arrivals and queue
dynamics, the proposed OBDWF protocol has low complexity with only
$O(K)$ communication overhead and $\ThetaL(1)$ total transmit power
in the relay network. It achieves a throughput $\ThetaL(\log_2 K)$,
which is a significant gain compared with the best known performance
in conventional protocols ($\ThetaL(\log_2 K)$ throughput with
$\ThetaL(K)$ total transmit power). Furthermore, both the stability
region and average delay can achieve order-wise performance gain
$\ThetaL(K)$ compared with conventional DDF protocol.}


\section{System Model}
\label{sec:model}

\begin{figure}
 \begin{center}
  \resizebox{8cm}{!}{\includegraphics{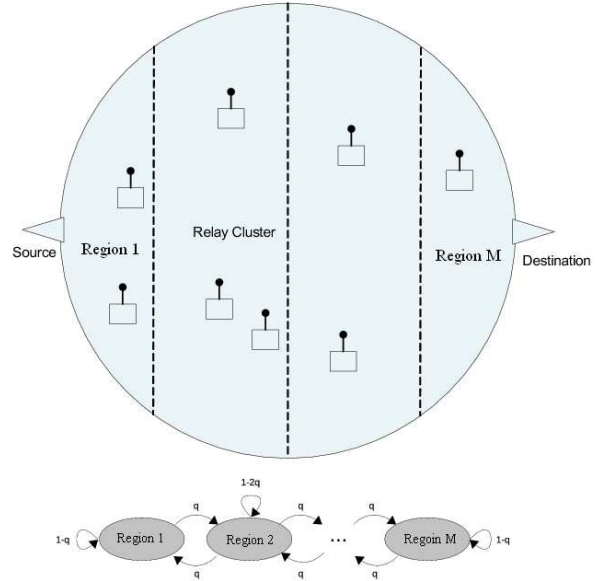}}
 \end{center}
    \caption{ System and random walk mobility model for the point-to-point communication
system with a half-duplexing mobile relay network. The parameter $q$
determines the mobility of the network (larger $q$ means higher
mobility).}
    \label{fig:system_model}
\end{figure}

In this section, we shall describe the system model for the
point-to-point communication system with a half-duplexing mobile
relay network. Specifically, there are $K$ mobile relay nodes
between the source node and the destination node, as shown in Fig.
\ref{fig:system_model}. 
Each of the source node and the
$K$ relay nodes has an infinite length buffer. To facilitate the
relay scheduling, transmission is partitioned into frames. Each
frame is further divided into three
types of slots defined as follows: 
\begin{itemize}
\item {\bf Channel Estimation Slot} is used by relays for
estimating the channel gains with the source and destination nodes.

\item {\bf Control Slot} is used by relays for distributive
control signaling of the OBDWF protocol. The details is given in the
protocol description.

\item {\bf Transmission Slot} is used for data transmission, and it last $\tau$ seconds.
\end{itemize}

\subsection{Relay Mobility Model}

Following \cite{Gupta:00,Gastpar:05}, we assume that the $K$ relays
are distributed on a disk with radius $R$ as illustrated in Fig.
\ref{fig:system_model}. The source and the destination nodes are
fixed at two ends of a diameter, and the disk is divided
horizontally into $M$ equal-area regions along the
source-destination diameter. These regions are denoted as region 1,
region 2, ..., and region $M$, from the source to the destination.
As illustrated in Fig. \ref{fig:system_model}, the movement of each
relay is modeled as a random walk (Markov chain) over these regions:

\begin{itemize}
\item At the beginning, each relay is uniformly distributed on the
disk. Movements of relays can only occur in discrete frames with
time index $t$.

\item Let $X_k(t)$ denote the region index of the $k$-th relay in the
$t$-th frame, $\{X_k(t)|t=1,...+\infty\}$ is a Markov chain with the
following transition matrix
\begin{equation}\label{eq:random_walk}
\begin{array}{l}
\Pr\{X_k(t+1)|X_k(t)\}=\\
\left\{ \begin{array}{ll}
q & X_k(t+1)=X_k(t)\pm1 \\
1-2q & X_k(t+1)=X_k(t)=2,...,M-1\\
1-q & X_k(t+1)=X_k(t)=1 \mbox{ or } M \\
0 & \mbox{ otherwise }
\end{array} \right.
\end{array}
\end{equation}
\item When one relay moves into a region, its actual location in
this region is uniformly distributed.
\end{itemize}

\begin{Rem}[Interpretation of Parameter $q$] The region transition probability $q\in(0,\frac{1}{2}]$ measures how
likely one relay will move into another region in the next frame,
and therefore, it is related to the average speed of the relays.
~\hfill \IEEEQED
\end{Rem}

\subsection{Physical Layer Model}
Let $H_{s,j}$ and $d_{s,j}$ be the small scale fading gain and the
distance between the source node and the $j$-th relay respectively,
and let $H_{j,d}$ and $d_{j,d}$ be the small scale fading gain and
the distance between the $j$-th relay and destination node
respectively.
\begin{Assumption}[Assumption on the Channel Model]
We assume that $\{H_{s,j},H_{j,d}\}$ are quasi static in each frame.
Furthermore, $\{H_{s,j},H_{j,d}\}$ are i.i.d over frames according
to a general distribution $\Pr\{H\}$ and independent between each
link.   ~\hfill \IEEEQED
\end{Assumption}

The relay network shares a common spectrum with bandwidth $W$Hz, and
each node transmits at a peak power $P$. Let $x_s$ be the
transmitted symbol from the source node, the received signal at the
$j$-th relay is given by: $y_j =
\sqrt{H_{s,j}/{d_{s,j}^{\alpha}}}x_s + z_j$, where $\alpha$ is the
path loss exponent, and $z_j$ is the i.i.d $\mathcal{N}(0,1)$ noise.
The achievable data rate between the source node and the $j$-th
relay is given by:
\begin{equation}
C_{s,j} = W\log_2 (1+P\xi H_{s,j}/{d_{s,j}^{\alpha}})
\label{eqn:cp-sj} ,
\end{equation}
where $\xi\in(0,1]$ is a constant can be used to model both the
coded and uncoded systems. 
Similarly, the achievable data rate between $j$-th relay and the
destination node is given by
\begin{equation}
C_{j,d} = W\log_2 (1+P\xi
H_{j,d}/{d_{j,d}^{\alpha}})\label{eqn:cp-jd} ,
\end{equation}

All the packets are transmitted at data rate $R=W\log_2\beta$ for
some constant $\beta$. The $j$-th relay could correctly decode the
packets transmitted from the source node if $R\leq C_{s,j}$, and the
destination node could correctly decode the packets transmitted from
the $j$-th relay if $R\leq C_{j,d}$. For easy discussion, we shall
denote a link as a {\it connected link} if its achievable data rate
is larger than $R$, and otherwise a {\it broken link}.

\section{The OBDWF Protocol}
\label{sec:protocol} In this section, we shall first describe the
proposed opportunistic buffered decode-wait-and-forward (OBDWF)
relay protocol for mobile relays.
\begin{Protocol}[OBDWF Protocol for Mobile Relays]
\
\begin{itemize}
\item[1.] Each relay measures the current states \{{\em connected, broken}\} of its links with the
destination node in the channel estimation slot.


\item[2.] The control slot is divided into mini-slots.
If the buffer in a relay is not empty and the link state to the
destination is {\em connected}, it will submit a request in a
control mini-slot. Using standard contention mechanism, one active
relay is selected to transmit its packet from its buffer to the
destination node\footnote{ The algorithm can be extended to consider
spatial combining from multiple relays. However, the performance
gain associated with that will be quite limited due to the path loss
effects. For instance, there is very low chance of having multiple
relays near the source or multiple relays (having the same common
packet) near the destination.}. The selected relay as well as all
the other relays will dequeue the same packet from their buffers.

\item[3.]
If none of the $K$ relays attempts to compete for access to the
destination node in the control slot, the source node will broadcast
a new packet to the $K$ relays { and the destination node} using a
fixed data rate $R=W\log_2 \beta$ for some constant $\beta$. The
source node will dequeue the packet from its buffer if there is an
ACK from at least one of the relays or { the destination node}.
~\hfill \IEEEQED
\end{itemize}\label{sch:macro-op}
\end{Protocol}

{ Note that the OBDWF protocol has only $O(K)$ communication
overhead with $\ThetaL(1)$ total transmit power in the relay
network, which is the same as the conventional DDF and conventional
AF protocol, elaborated in Table \ref{tab:baselines}. Unlike
conventional DDF protocol where the phase I (source to relay) and
the phase II (relay to destination) of the same packet appear as
inseparable atomic actions, the proposed OBDWF protocol exploits
buffers in the relays to create the flexibility to schedule phase I
and phase II of the same packet based on the instantaneous channel
state as illustrated in Fig. \ref{fig:timing}. Coupled with relay
mobility, the proposed OBDWF protocol allows the relay to buffer the
packet and wait for good opportunity (when the relays is close to
the destination) to deliver the packet. As a result, relay mobility
allows the system to operate at a higher throughput at the expense
of larger delay. We shall quantify such tradeoff in Section
\ref{sec:thrput} and \ref{sec:delay}. }

\begin{figure}
 \begin{center}
  \resizebox{9cm}{!}{\includegraphics{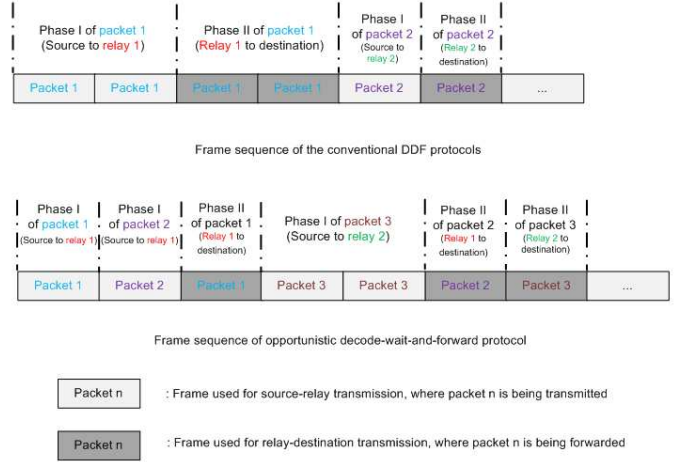}}
 \end{center}
    \caption{ Frame sequences for conventional DDF protocl and opportunistic buffered decode-wait-and-forward (OBDWF) protocol.
In conventional DDF protocol, the phase I and the phase II of the
same packet appear as inseparable atomic actions. On the other hand,
the proposed OBDWF protocol exploits buffers in the relays to create
the flexibility to schedule phase I and phase II of the same packet
based on the instantaneous channel state. Coupled with relay
mobility, the proposed OBDWF protocol allows the relay to buffer the
packet and wait for good opportunity (when the relays is close to
the destination) to deliver the packet.
    }
    \label{fig:timing}
\end{figure}


\section{Throughput Performance with Infinite Backlog} \label{sec:rlt-macro}
\label{sec:thrput} In this section, we shall first discuss the
average system throughput of the proposed OBDWF protocol with
infinite backlog at the source buffer. We first define the average
throughput below.

\begin{Def}[Average End-to-End System Throughput]
\label{def:avg_thrput} Let $T_i$ be the number of information bits
successfully received by the destination node at the $i$-th frame.
The average end-to-end system throughput between the source node and
the destination node $\overline{T}$ is defined as  $\overline{T} =
\lim_{\kappa\rightarrow\infty}\frac{\sum_{i=1}^\kappa
T_i}{\kappa\tau}$. ~\hfill \IEEEQED
\end{Def}

\subsection{Throughput Performance of the OBDWF Protocol}
Note that for a fixed number of relay nodes $K$, when the source
node increases the data rate $R$, the associate radio coverage and
the number of relays who can decode the source packet becomes
smaller. On the other hand, for fixed $R$, the number of relays who
can decode the source packet increases as $K$ increases. We shall
quantify such scaling relationship in Lemma \ref{lem:relay_decode}.
\begin{Lemma}[Scaling Relationship of Connected Link]\label{lem:relay_decode} Denote the transmit data
rate $R=W\log_2\beta$ and $\gamma=\beta^{\frac{2}{\alpha}}$ for some
constant $\beta$. For sufficiently large $K$, the following
statements are true:
\begin{itemize}
\item{I.}
If $\limK \frac{\gamma}{K}=0$, the probability that there are
$\ThetaL(\frac{K}{\gamma})$ relays having connected links with the
source node (or the destination node) tends to $1$.
\item{II.}
If $\limK \frac{\gamma}{K}>0$, the probability that there are
$\ThetaL(1)$ relays having connected links with the source node (or
the destination node) is $\ThetaL(\frac{K}{\gamma})$. ~\hfill
\IEEEQED
\end{itemize}
\end{Lemma}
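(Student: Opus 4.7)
The plan is to reduce the question to a binomial tail analysis. By symmetry it suffices to consider the source side. Given the channel fading $H$, the source-to-relay link is connected iff the relay lies within distance $r^{*}(H):=(P\xi H/(\beta-1))^{1/\alpha}$ of the source. Placing the source at one end of a diameter of the deployment disk of radius $R_0$ and integrating the uniform relay law over this disk, the conditional connection probability is the area of the intersection between the ``radio disk'' of radius $r^{*}(H)$ around the source and the deployment disk, divided by $\pi R_0^{2}$; for $r^{*}(H)\ll R_0$ this intersection is a half-disk of area $\pi(r^{*}(H))^{2}/2$. Averaging over $H$ and using $(\beta-1)^{2/\alpha}\sim\gamma$ for large $\beta$ yields
\begin{equation*}
 p \;=\; \frac{c}{\gamma}\bigl(1+o(1)\bigr), \qquad c\;:=\;\frac{(P\xi)^{2/\alpha}\,\mathbb{E}[H^{2/\alpha}]}{2R_0^{2}},
\end{equation*}
so the number of connected relays $N$ is binomial $\mathrm{Bin}(K,p)$ with mean $Kp\sim cK/\gamma$.

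In Case I, $\gamma/K\to 0$ so $Kp\to\infty$; a multiplicative Chernoff bound then gives $\Pr[|N-Kp|>\epsilon Kp]\leq 2\exp(-\Theta(Kp))\to 0$ for every $\epsilon\in(0,1)$, and I conclude $N=\ThetaL(K/\gamma)$ with probability tending to $1$. In Case II, $\liminf_K \gamma/K>0$ so $Kp$ is bounded; I would compute directly
\begin{equation*}
 \Pr[N\ge 1]\;=\;1-(1-p)^{K}\;=\;1-\exp\!\bigl(-Kp(1+o(1))\bigr),
\end{equation*}
which equals $\ThetaL(Kp)=\ThetaL(K/\gamma)$ when $Kp\to 0$, and equals $\ThetaL(1)=\ThetaL(K/\gamma)$ when $Kp$ stays bounded away from $0$. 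Because $\mathbb{E}[N\mid N\ge 1]\le 1+(K-1)p$ is itself bounded in this regime, $N$ is $\ThetaL(1)$ whenever it is positive, matching the claim.

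The main obstacle is the clean asymptotic evaluation of $p$. Two subtleties deserve care. First, the half-disk formula is only asymptotic for $r^{*}(H)\ll R_0$, so the expectation over $H$ must be split into a bulk contribution where $r^{*}(H)\le \epsilon R_0$ (producing the stated constant $c$) and a tail where $r^{*}(H)$ is comparable to $R_0$; the tail has to be shown to be $o(1/\gamma)$ under a mild moment condition $\mathbb{E}[H^{2/\alpha}]<\infty$. Second, the same moment condition ensures $c\in(0,\infty)$, which is what makes $p=\Theta(1/\gamma)$ genuinely two-sided. Once $p=\Theta(1/\gamma)$ is secured, the remainder is routine binomial concentration plus a Poisson-limit computation, and the destination-side statement follows verbatim by symmetry of the channel model in Assumption~1.
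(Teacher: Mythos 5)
Your proposal is correct and follows essentially the same route as the paper's Appendix A: condition on the fading to get a half-disk coverage region, average over $H$ to obtain a per-relay connection probability $\ThetaL(1/\gamma)$, model the count as $\mathrm{Binomial}(K,\ThetaL(1/\gamma))$, and then use concentration for Case I and $1-(1-p)^K$ for Case II. Your version is in fact slightly more careful than the paper's (explicit Chernoff bound in place of an appeal to the law of large numbers, the moment condition $\mathbb{E}[H^{2/\alpha}]<\infty$ made explicit, and the boundary-effect tail of the coverage area controlled), but these are refinements of the same argument rather than a different one.
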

\begin{proof}
Please refer to Appendix A.
\end{proof}

Using Lemma \ref{lem:relay_decode}, we obtain the closed-form
asymptotic average system throughput under infinite backlog at the
source buffer.
\begin{Theorem}{\em (Throughput Performance of the OBDWF
Protocol):}\label{thm:avg_thrput} For sufficiently large $K$ and
infinite backlog at the source buffer, the maximal average system
throughput of the proposed OBDWF protocol under the random walk
mobility model in (\ref{eq:random_walk}) is given by
\begin{equation}
\overline{T}^*_{\text{OBDWF}}=\ThetaL(\log_2 K)
\end{equation}
This order-wise throughput is achieved when
$\gamma=\ThetaL(K^{\sigma})$ for some constant $\sigma\in(0,1]$.
Furthermore, $\overline{T}^*_{\text{OBDWF}}$ is upper-bounded by but
infinitely close to $ \frac{W\alpha}{4} \log_2 K$. ~\hfill\IEEEQED
\end{Theorem}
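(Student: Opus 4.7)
The plan is to express the throughput as $R$ times the long-run fraction of frames carrying a successful phase~II transmission, then use Lemma~\ref{lem:relay_decode} together with the random-walk mobility to evaluate that fraction as a function of the design parameter $\gamma = \beta^{2/\alpha}$, and finally optimize over $\gamma$.

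First, I would classify each frame as phase~I (source broadcasts) or phase~II (one relay forwards); let $f_2$ be the long-run fraction of phase~II frames. Protocol~\ref{sch:macro-op} guarantees that whenever phase~II is triggered the selected relay holds a buffered packet and has a connected link to the destination, so the destination decodes with probability one. Ignoring the sub-leading contribution of direct source-to-destination decoding during phase~I,
\begin{equation}
\overline{T} \;=\; R\, f_2 \;=\; \tfrac{W\alpha}{2}(\log_2 \gamma)\, f_2.
\end{equation}

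Second, I would derive a flow-balance equation for the effective shared relay buffer. Let $p_1$ and $p_2$ denote the per-frame probabilities that some relay is connected to the source and to the destination, respectively. Each phase~II frame removes exactly one distinct packet from the shared queue; each of the remaining $(1-f_2)$ phase~I frames deposits a new distinct packet with probability $p_1$ (infinite backlog at the source ensures a fresh packet is always available). Balancing arrivals and departures gives $(1-f_2)\,p_1 = f_2$, hence
\begin{equation}
f_2 \;=\; \frac{p_1}{1+p_1} \;\le\; \tfrac{1}{2},
\end{equation}
subject to the availability constraint $f_2 \le p_2$.

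Third, I would invoke Lemma~\ref{lem:relay_decode} with $\gamma = \ThetaL(K^{\sigma})$. For $\sigma \in (0,1)$, case~I applies and $p_1, p_2 \to 1$, so $f_2 \to 1/2$ and $\overline{T} \to \frac{W\alpha\sigma}{4}\log_2 K$. For $\sigma = 1$, case~II applies and $p_1, p_2 = \ThetaL(1)$ are strictly bounded away from one, so $\overline{T} = \ThetaL(\log_2 K)$ but with a smaller leading constant. For $\sigma > 1$, both $p_1$ and $p_2$ decay polynomially in $K$ while $R$ only grows like $\sigma\log_2 K$, forcing $\overline{T} = o(\log_2 K)$. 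Optimizing over $\sigma$ yields $\overline{T}^{*}_{\text{OBDWF}} = \ThetaL(\log_2 K)$ for any $\sigma \in (0,1]$, with supremum $\frac{W\alpha}{4}\log_2 K$ approached but never attained as $\sigma \to 1^{-}$.

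The main obstacle is the flow-balance step, which tacitly assumes that the buffer of some destination-connected relay is essentially never empty in steady state. I would close this by appealing to the mobility model~(\ref{eq:random_walk}): the stationary distribution of each relay's position charges all $M$ regions with positive probability, so a relay currently near the destination has recently traversed source-side regions and almost surely accumulated packets along the way. Making this coupling between the positional Markov chains and the shared queue dynamics rigorous---and in particular ruling out buffer starvation in the $f_2 = p_1/(1+p_1)$ regime---is the one non-mechanical step of the proof.
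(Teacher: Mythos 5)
Your proposal is correct and follows essentially the same route as the paper's Appendix B: both use Lemma~\ref{lem:relay_decode} to show that for $\limK \gamma/K=0$ the half-duplex alternation yields throughput $R/2=\frac{W\alpha}{4}\log_2\gamma$, that for $\limK \gamma/K>0$ the source must retransmit $\ThetaL(\gamma/K)$ times per packet so the throughput drops to $\ThetaL(K\log_2\gamma/\gamma)$, and then both optimize by pushing $\gamma$ toward $\ThetaL(K)$ from below to obtain the unattained supremum $\frac{W\alpha}{4}\log_2 K$. Your flow-balance derivation of $f_2=p_1/(1+p_1)$ is merely a more explicit version of the paper's direct assertion that the throughput equals $R/2$, and the buffer-starvation issue you flag as the remaining non-mechanical step is likewise left implicit in the paper, which simply asserts that region-$M$ relays have packets to forward.
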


\begin{proof}
Please refer to Appendix B.
\end{proof}

\begin{Rem}[Interpretation of Theorem \ref{thm:avg_thrput}]
Since there are infinitely large buffers at the relay nodes and the
random-walk transition probability $q$ is positive, the average
system throughput is $R/2$ as long as there are always relays having
connected links to the source and destination node (which is
presented mathematically as $\limK \frac{\gamma}{K}=0$ by Lemma
\ref{lem:relay_decode}). ~\hfill \IEEEQED
\end{Rem}

\subsection{Comparison with the Conventional DDF Protocol}

Similarly, we shall summarize the closed-form asymptotic average
system throughput for the conventional DDF protocol ({ elaborated in
Table \ref{tab:baselines}}) below.
\begin{Lemma}{\em (Throughput Performance of Conventional DDF Protocol):}\label{lem:macro-base}
For sufficiently large $K$ and infinite backlog at the source
buffer, the maximal average system throughput of the conventional
DDF protocol under the random walk mobility model in
(\ref{eq:random_walk}) is given by:
\begin{equation}
\overline{T}^*_{\text{DDF}} = \min \left\{\ThetaL((Kq^{M-1})^{1/M}),
\ThetaL \left(\log_2 (Kq^{M-1}) \right)\right\}
\end{equation}
This order-wise throughput is achieved when
\begin{equation}
\gamma =\max \left\{\ThetaL(1), \ThetaL(\sqrt{Kq^{M-1}})\right\}
\end{equation}
\end{Lemma}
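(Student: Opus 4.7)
My plan is to write $\overline{T}_{\text{DDF}} = (R/2)\cdot p_{\text{succ}}(\gamma, K, M, q)$, where $R = W\log_2\beta$ is the transmission rate, the factor of $1/2$ reflects the half-duplex two-phase operation, and $p_{\text{succ}}$ is the per-frame probability that at least one selected relay completes both phase I (source-to-relay) and phase II (relay-to-destination) as an atomic action. I would then maximize this product over $\gamma=\beta^{2/\alpha}$, mirroring the style of the proof of Theorem \ref{thm:avg_thrput}.

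The key technical task is to characterize $p_{\text{succ}}$. Using Lemma \ref{lem:relay_decode}, for sufficiently large $K$ with $\gamma/K\to 0$ there are $\ThetaL(K/\gamma)$ relays connected to the source, and these are concentrated near the source end of the disk (region $1$). Because the DDF action is atomic, the relay that just decoded in phase I must itself deliver to the destination in phase II, without letting another relay buffer the packet and wait. This forces the same relay to traverse from region $1$ to region $M$ on its own; the fastest path for the random walk in (\ref{eq:random_walk}) takes $M-1$ monotone steps and has probability $q^{M-1}$, which up to constants dominates the first-passage probability over any fixed short horizon. Conditional on arriving near the destination, landing within the destination decoding ball contributes a further $\ThetaL(1/\gamma)$ factor, giving $p_{\text{succ}}=\ThetaL(Kq^{M-1}/\gamma^2)$ in the large-$\gamma$ regime.

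Setting $p_{\text{succ}}=\ThetaL(1)$ yields the throughput-optimal choice $\gamma^{\star}=\ThetaL(\sqrt{Kq^{M-1}})$, at which $R=(W\alpha/2)\log_2\gamma=\ThetaL(\log_2(Kq^{M-1}))$. This gives the second branch of the $\min$. The first branch, $\ThetaL((Kq^{M-1})^{1/M})$, arises in the regime where the binding constraint is the multi-hop scheduling bottleneck: over the $M$ atomic steps a packet must transit across the disk, the effective per-packet rate is limited by a min-cut style bound, yielding the $M$-th root expression. Taking the min of the two regimes and optimizing over $\gamma$ produces the stated formula with optimal $\gamma=\max\{\ThetaL(1),\ThetaL(\sqrt{Kq^{M-1}})\}$.

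The main obstacle I anticipate is the rigorous two-sided bound of the first-passage probability of the birth-death chain in (\ref{eq:random_walk}) by $\ThetaL(q^{M-1})$, because the conditional starting distribution of a source-connected relay is not uniform across the disk and the atomic-action constraint introduces correlations with the subsequent walk. A coupling or direct recursion on the chain should suffice, but careful bookkeeping is required if $M$ is allowed to grow with $K$. Once the $q^{M-1}$ factor is pinned down, the remaining ingredients---Lemma \ref{lem:relay_decode}, the min-cut argument for the multi-hop regime, and the final calculus-style optimization over $\gamma$---are routine.
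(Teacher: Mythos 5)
There is a genuine gap, and it sits exactly where you flagged difficulty: the characterization of the phase-II delivery. The paper's route is the same at the top level (throughput $=\ThetaL(\log_2\gamma/\overline{D}_S)$, i.e., rate divided by average service time, which is your $(R/2)\,p_{\text{succ}}$ in disguise), but your per-frame success probability $p_{\text{succ}}=\ThetaL(Kq^{M-1}/\gamma^2)$ is treated as time-homogeneous, and it is not. The $\ThetaL(K/\gamma)$ relays that decode the packet start near the source and diffuse; after $t$ frames (with $qt=O(1)$) the number of them in region $M$ is $N_M(t)=\ThetaL\bigl(\tfrac{K}{\gamma}(qt)^{M-1}\bigr)$, so the per-frame connection probability grows like $\tfrac{K}{\gamma^2}(qt)^{M-1}$ and the cumulative success probability over $T$ frames is $\ThetaL\bigl(\tfrac{Kq^{M-1}}{\gamma^2}T^{M}\bigr)$. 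Solving for when this reaches $\ThetaL(1)$ gives the relay-network delay $\overline{\eta}=\ThetaL\bigl[(\gamma^2/(Kq^{M-1}))^{1/M}\bigr]$ (Lemma \ref{lem:delay_relay_network}, Appendix D), i.e., the $M$-th root, not your $\gamma^2/(Kq^{M-1})$. (There is also a separate ``balanced'' regime $\overline{\eta}>\ThetaL(1/q)$ giving $\ThetaL(\gamma^2/K)$, which you omit, though it does not bind at the throughput optimum.)

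The consequence: because $x=\ThetaL(1)$ iff $x^{1/M}=\ThetaL(1)$, your condition $p_{\text{succ}}=\ThetaL(1)$ still lands on the correct $\gamma^*=\ThetaL(\sqrt{Kq^{M-1}})$ and the $\ThetaL(\log_2(Kq^{M-1}))$ branch. But in the low-mobility regime $Kq^{M-1}=o(1)$, where $\gamma^*=\ThetaL(1)$ and the service time exceeds one frame, your geometric-service model predicts throughput $\ThetaL(Kq^{M-1})$, whereas the correct answer is $\ThetaL(1)/\overline{\eta}=\ThetaL((Kq^{M-1})^{1/M})$, which is order-wise larger. The first branch of the $\min$ is therefore not a separate ``min-cut'' artifact to be asserted independently --- it falls out of the same first-passage calculation once the polynomial growth of $N_M(t)$ is accounted for, and your appeal to a min-cut bound neither derives the $M$-th root nor connects to your own $p_{\text{succ}}$ computation. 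The fix is to replace the constant-$p_{\text{succ}}$ model with the fixed-point equation $\overline{\eta}=\ThetaL(W(\overline{\eta}))$ with $W(\overline{\eta})=\max\{1,\ThetaL(\gamma/N_M(\overline{\eta}))\}$; then both branches of the $\min$ and the stated optimal $\gamma$ follow from the single expression $\overline{T}_{\text{DDF}}=\ThetaL(\log_2\gamma/\overline{D}_S)$.
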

\begin{proof}
Please refer to the Appendix C.
\end{proof}

Therefore, we have the following corollary on the performance gain
of the OBDWF protocol:

\begin{Corollary}{\em (Comparison of the Average System
Throughput):} \label{cor:thrput} The throughput gain of the OBDWF
protocol is
\begin{equation}
\frac{\overline{T}^*_{\text{OBDWF}}}{\overline{T}^*_{\text{DDF}}}=\left\{
\begin{array}{ll}
\mathbf{\Theta}\left(\frac{\log_2 K}{\log_2(Kq^{M-1})}\right) &
\text{if } \limK Kq^{M-1}=\infty\\
\mathbf{\Theta}\left(\frac{\log_2 K}{(Kq^{M-1})^{1/M}}\right) &
\text{otherwise}
\end{array}
\right.
\end{equation}
\end{Corollary}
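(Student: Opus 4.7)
The plan is to combine the results of Theorem \ref{thm:avg_thrput} and Lemma \ref{lem:macro-base} directly, then identify which of the two candidate expressions realizes the minimum in the formula for $\overline{T}^*_{\text{DDF}}$ in each of the two regimes. Concretely, I would start from
\[
\frac{\overline{T}^*_{\text{OBDWF}}}{\overline{T}^*_{\text{DDF}}} = \frac{\ThetaL(\log_2 K)}{\min\bigl\{\ThetaL((Kq^{M-1})^{1/M}),\ \ThetaL(\log_2(Kq^{M-1}))\bigr\}},
\]
so that the only remaining task is to decide which term inside the $\min$ dominates asymptotically.

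First I would handle the regime $\limK Kq^{M-1}=\infty$. Setting $x_K := Kq^{M-1}\to\infty$ and invoking the elementary asymptotic comparison $\log_2 x = o(x^{1/M})$ as $x\to\infty$ (valid for any fixed positive integer $M$, since any positive fractional power of $x$ dominates any logarithm), I obtain $\log_2(Kq^{M-1}) = o((Kq^{M-1})^{1/M})$. Hence the minimum is realized by $\ThetaL(\log_2(Kq^{M-1}))$, which gives the first line of the claimed ratio.

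In the complementary regime where $Kq^{M-1}$ does not diverge, it is bounded above by a constant for sufficiently large $K$. Both $(Kq^{M-1})^{1/M}$ and $\log_2(Kq^{M-1})$ are then $\ThetaL(1)$ quantities, so $\overline{T}^*_{\text{DDF}} = \ThetaL((Kq^{M-1})^{1/M})$ and the ratio becomes $\ThetaL(\log_2 K/(Kq^{M-1})^{1/M})$, matching the second line of the corollary. Writing the denominator in this explicit form, rather than collapsing it to $\ThetaL(1)$, preserves the dependence on the mobility parameter $q$ and keeps the statement uniform in the two regimes.

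No substantive obstacle is anticipated: the corollary is essentially an accounting step after Theorem \ref{thm:avg_thrput} and Lemma \ref{lem:macro-base}, hinging only on the comparison between $\log_2 x$ and $x^{1/M}$. The only mild subtlety is the edge case where $Kq^{M-1}$ neither diverges nor remains bounded (i.e., has finite liminf and infinite limsup). This is handled by a routine subsequence argument: on each subsequence one of the two regimes applies and gives the corresponding asymptotic ratio, with the two expressions agreeing up to constants at the transition.
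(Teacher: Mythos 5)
Your overall route --- dividing the result of Theorem \ref{thm:avg_thrput} by that of Lemma \ref{lem:macro-base} and deciding which branch of the $\min$ is active --- is exactly what the paper intends (it states the corollary without a separate proof), and your treatment of the regime $\limK Kq^{M-1}=\infty$ via the elementary comparison $\log_2 x = o(x^{1/M})$ is correct.

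The gap is in the ``otherwise'' regime. You assert that when $Kq^{M-1}$ is bounded, both $(Kq^{M-1})^{1/M}$ and $\log_2(Kq^{M-1})$ are $\ThetaL(1)$, so that the choice of branch is immaterial. This fails precisely in the sub-case that carries the content of the second line, namely $Kq^{M-1}\rightarrow 0$ (low mobility, $q$ decaying faster than $K^{-1/(M-1)}$): there $(Kq^{M-1})^{1/M}\rightarrow 0$ while $\log_2(Kq^{M-1})\rightarrow -\infty$, so neither quantity is $\ThetaL(1)$, and a literal reading of the $\min$ in Lemma \ref{lem:macro-base} would select the log branch, which is negative and meaningless as a throughput. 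The correct resolution is not a pointwise comparison of the two expressions but a return to the optimization over $\gamma$ behind Lemma \ref{lem:macro-base} (Appendix C): the log branch is attained only by the choice $\gamma^*=\ThetaL(\sqrt{Kq^{M-1}})$, which is the optimizer only when $\sqrt{Kq^{M-1}}$ is bounded away from zero; when $\sqrt{Kq^{M-1}}=o(1)$ the optimizer degenerates to $\gamma^*=\ThetaL(1)$ and $\overline{T}^*_{\text{DDF}}=\ThetaL\left((Kq^{M-1})^{1/M}\right)$ is the only achievable branch. With that substitution your second case goes through, and your subsequence remark for oscillating $Kq^{M-1}$ should likewise apply this corrected criterion on each subsequence rather than the claim that both candidates are $\ThetaL(1)$.
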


\begin{Rem}[Interpretation of Corollary \ref{cor:thrput}]
Note that when the system mobility is low ($q=(\frac{\log_2
K}{K})^{\frac{1}{M-1}}$), there is an order-wise throughput gain
achieved by the OBDWF protocol. ~\hfill \IEEEQED
\end{Rem}

\section{Stability and Delay Performance with Bursty Arrivals}\label{sec:delay}
In this section, we shall focus on the stability region and the
delay performance analysis of the proposed OBDWF protocol under
bursty packet arrivals. We shall first define the busty source
model, followed by the analysis on the stability region and average
delay performance.

\subsection{Bursty Source Model}
Let $A_s(t)$ represents the number of new packets arriving at the
source node at the beginning of the $t$-th frame.
\begin{Def}[Bursty Source
Model]\label{ass:source_model} We assume that the arrival process
$A_s(t)$ is i.i.d over the frame index $t$ according to a general
distribution $\Pr\{A_s\}$. Each new packet has a fixed number of
bits $\overline{N}$. The first and second order moments of the
arrival process are denoted by $\lambda_s=\mathbb{E}[A_s]$ and
$\lambda_s^{(2)}=\mathbb{E}[A_s^2]$ respectively. ~\hfill \IEEEQED
\end{Def}

Let $Q_s(t)$ be the number of packets in the source buffer at the
$t$-th frame. The dynamics of the source buffer state is given by:
\begin{equation}
Q_s(t+1) = A_s(t+1) + \left[ Q_s(t) - U_s(t) \right]^{+}
\end{equation}
where $U_s(t)\in\{0,1\}$ is the number of packets transmitted to the
relay network at $t$-th frame. Furthermore, let $Q_k(t)$ be the
number of packets in the the $k$-th relay node's buffer at the
$t$-th frame. The dynamics of the relay buffer state is given by:
\begin{equation}
Q_k(t+1) = A_k(t+1) + \left[ Q_k(t) - U_k(t) \right]^{+}
\end{equation}
where $A_k(t)\in\{0,1\}$ is the number of packets received by the
$k$-th relay node from the source node at the beginning of the
$t$-th frame, and $U_k(t)\in\{0,1\}$ is the number of packets
dequeued from the $k$-th relay node at the $t$-th frame.

%

\subsection{Stability Performance}
In this section, we shall derive the stability region of the OBDWF
protocol and the conventional DDF protocols. We first define the
notion of queue stability \cite{Stability:1999,Stability:2005}
below.
\begin{Def}[Stability of the Queueing System]\label{def:stable}
The queueing system is {\em stable}, if
\begin{equation}
\lim_{t\rightarrow\infty}\Pr\{Q(t)<x\}=F(x) \text{ and
}\lim_{x\rightarrow\infty}F(x)=1
\end{equation}
where $Q(t)$ is the queue state in the queueing system at the $t$-th
frame. ~\hfill\IEEEQED
\end{Def}

Using Definition \ref{def:stable}, we have the following Theorem for
the OBDWF protocol.
\begin{Theorem}[Stability Region of the OBDWF
Protocol]\label{thm:stability} For sufficiently large $K$, the
system of queues $\{Q_s(t),Q_1(t),\cdots,Q_K(t)\}$ under the
proposed OBDWF protocol are stable if and only if
$\lambda_s<\frac{1}{\zeta}$, where $\zeta$ is given by
\begin{equation}
\label{eq:zeta}
\zeta=\max\left\{\ThetaL(1),\ThetaL(\gamma/{K})\right\}
\end{equation}
\end{Theorem}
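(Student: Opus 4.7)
\textit{Plan overview.} The plan is to characterize the per-frame service rate of the source queue via Lemma~\ref{lem:relay_decode}, and then to combine two standard queueing-theoretic tools: Loynes' theorem for necessity, and a coupling with the infinite-backlog system of Theorem~\ref{thm:avg_thrput} for sufficiency. The pivotal observation is that by Step~3 of the OBDWF protocol, the source dequeues in a given frame only if (i) no relay with a non-empty buffer is connected to the destination, and (ii) at least one relay (or the destination itself) is connected to the source. Let $p_s$ denote the per-frame probability of event (ii); applying Lemma~\ref{lem:relay_decode} on the source side, $p_s = 1-o(1)$ when $\limK \gamma/K = 0$, and $p_s = \ThetaL(K/\gamma)$ when $\limK \gamma/K > 0$. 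In either regime $p_s = \ThetaL(1/\zeta)$, so the whole statement reduces to showing that the system is stable iff $\lambda_s < p_s$.

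\textit{Necessity.} In any frame $t$ the source departure $U_s(t)\in\{0,1\}$ is bounded sample-path-wise by the indicator of event (ii), which by the i.i.d.\ channel assumption is Bernoulli with parameter $p_s$ and independent of $Q_s(t)$. Hence $\mathbb{E}[U_s(t)]\leq p_s$, so whenever $\lambda_s \geq p_s$, Loynes' theorem applied to the isolated source queue forces $Q_s(t)\to\infty$ almost surely. This rules out stability whenever $\lambda_s \geq 1/\zeta$.

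\textit{Sufficiency.} For the other direction I would introduce a phantom system that is driven by the same channel, mobility and control realizations but in which the source buffer is kept infinitely backlogged, while the OBDWF rules are otherwise unchanged. Converting the $\ThetaL(\log_2 K)$ bits/sec throughput of Theorem~\ref{thm:avg_thrput} into packets per frame gives that the phantom source departs packets at the long-run rate $\ThetaL(1/\zeta)=p_s$, and that the phantom relay queues are themselves tight (for otherwise the throughput in Theorem~\ref{thm:avg_thrput} would collapse). A sample-path coupling then bounds each $Q_k(t)$ by its phantom counterpart, while the source queue experiences the strict negative drift $\mathbb{E}[Q_s(t+1)-Q_s(t)\mid Q_s(t)]\leq \lambda_s - p_s < 0$ as soon as $\lambda_s < p_s$. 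Stability of the full vector $(Q_s,Q_1,\dots,Q_K)$ then follows from a Foster--Lyapunov argument with $V=Q_s^2+\sum_k Q_k^2$.

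\textit{Main obstacle.} The delicate step is establishing tightness of the phantom relay queues, because the $K$ relay buffers are coupled (every successful phase~II frame dequeues the delivered packet from each relay that still holds it) and are driven by the mobility Markov chain in (\ref{eq:random_walk}). Showing that the per-frame rate of phase~II firings matches the per-frame rate of phase~I firings in the phantom system requires exploiting the irreducibility of the random walk with $q\in(0,1/2]$, so that every relay visits region $M$ with positive stationary probability and the two fading ensembles $\{H_{s,j},d_{s,j}\}$ and $\{H_{j,d},d_{j,d}\}$ play symmetric roles in the long run. Most of the technical burden of the proof lies in turning this long-run balance into a quantitative drift bound on the coupled relay queue vector, after which the coupling-and-Lyapunov argument above closes the sufficiency direction.
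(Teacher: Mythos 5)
Your necessity argument (dominating the source departures by a Bernoulli process of rate $p_s=\ThetaL(1/\zeta)$ obtained from Lemma~\ref{lem:relay_decode}) matches the substance of the paper's argument, which simply identifies $\zeta$ as the mean service time of the source queue and invokes the standard ``$\lambda_s b<1$'' stability criterion. For sufficiency, however, you take a genuinely different route: the paper does not introduce a phantom infinite-backlog system or a Foster--Lyapunov function. Instead it handles the $K$ coupled relay buffers with a single dominating \emph{fictitious aggregate queue} $Q_r(t)=\sum_{k}Q_k(t)$: since $Q_k(t)\le Q_r(t)$ for every $k$, it suffices to stabilize $Q_r$, whose arrival rate is the source's departure rate $\min\{\lambda_s,1/\zeta\}$ and whose mean service time $b_r$ is bounded by $\zeta$ because the relay-to-destination connectivity obeys the same scaling (Lemma~\ref{lem:relay_decode}) as the source-to-relay connectivity. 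This aggregation is exactly the device that discharges what you correctly identify as the main obstacle.

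That obstacle is where your plan has a genuine gap: you defer the tightness of the coupled relay queues to an unspecified ``quantitative drift bound,'' but without it neither half of your sufficiency argument closes. First, your source-drift bound $\mathbb{E}[Q_s(t+1)-Q_s(t)\mid Q_s(t)]\le\lambda_s-p_s$ is not valid as written, because the source is blocked in every phase-II frame; its effective service rate is $p_s$ times the long-run fraction of phase-I frames, and controlling that fraction already requires knowing that phase-II opportunities keep pace with phase-I injections, i.e.\ relay-queue stability. Second, invoking Theorem~\ref{thm:avg_thrput} to certify the phantom departure rate is close to circular: that theorem's $\ThetaL(\log_2 K)$ throughput itself presupposes that buffered packets are always available at relays near the destination, which is essentially the tightness statement you need. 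The fix is the paper's symmetry observation: the aggregate relay network drains packets to the destination at least as fast (order-wise, $\ThetaL(1/\zeta)$ per frame) as the source can inject them, because Lemma~\ref{lem:relay_decode} applies identically on the destination side; once that is stated, the per-relay queues are dominated by a stable single-server queue and no Lyapunov machinery is needed.
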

\begin{proof}
Please refer to Appendix E.
\end{proof}

Similarly, the stability region of the conventional DDF protocol is
given by:
\begin{Lemma}[Stability Region of the Conventional DDF
Protocol]\label{lem:stability_base} For sufficiently large $K$, the
system of queues $\{Q_s(t),Q_1(t),\cdots,Q_K(t)\}$ under
conventional DDF protocol are stable if and only if
$\lambda_s<\frac{1}{\overline{D}_S}$, where $\overline{D}_S$ is
given by
\begin{equation}
\label{eq:DR_maintext} \begin{array}{l}\overline{D}_S =\\
 \left\{
\begin{array}{ll}
\begin{array}{l}\max\big\{ \ThetaL(1), \\
\ThetaL \big[(\frac{\gamma^2}{Kq^{M-1}})^{\frac{1}{M}} \big] \big\}\end{array} & \text{If } \limK  \frac{\gamma}{K}=0 \text{ and }\limK\frac{K}{q\gamma^2}>0  \\
\ThetaL (\frac{\gamma^2}{K}) & \text{If } \limK  \frac{\gamma}{K}=0
\text{ and } \limK\frac{K}{q\gamma^2} =0\\
\ThetaL \left[\left(\gamma/q^{M-1}\right)^{\frac{1}{M}} \right] & \text{If }\limK  \frac{\gamma}{K}>0 \text{ and } \limK\frac{1}{q\gamma}>0  \\
\ThetaL (\gamma) & \text{If }\limK \frac{\gamma}{K}>0 \text{ and }
\limK\frac{1}{q\gamma} =0
\end{array} \right.
\end{array}
\end{equation}
\end{Lemma}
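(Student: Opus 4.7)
The plan is to first reduce the multi-queue stability question to a single service-rate computation at the source, and then to evaluate that service rate in the four asymptotic regimes. Under the conventional DDF protocol, phase I and phase II are atomic within one frame, so a packet leaves the source buffer exactly when some relay has both its source-side and destination-side links connected in that frame. The relay buffers $Q_k(t)$ therefore never accumulate more than one packet at a time (each packet that enters a relay is flushed to the destination within the same frame), so the joint system $(Q_s,Q_1,\ldots,Q_K)$ is stable if and only if the source queue is stable. A standard Loynes/Foster--Lyapunov drift argument then gives stability if and only if $\lambda_s$ is strictly less than the per-frame success probability $p_{\text{DDF}}$ of an atomic DDF transmission, so $\overline{D}_S = 1/p_{\text{DDF}}$ and it remains to compute $p_{\text{DDF}}$ in each regime.

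Next I would express the event ``at least one relay succeeds in a frame'' as the union over relays $k$ of $\{C_{s,k}\ge R\}\cap\{C_{k,d}\ge R\}$. Conditioning on the region index $X_k$ (whose marginal is the uniform stationary law of the chain (\ref{eq:random_walk})), the two link-level events become conditionally independent because $H_{s,k}$ and $H_{k,d}$ are mutually independent and independent of the position. Lemma \ref{lem:relay_decode} then controls, separately on the source and destination sides, the number of connected relays: $\ThetaL(K/\gamma)$ \whp\ when $\gamma/K\to 0$, and $\ThetaL(1)$ with probability $\ThetaL(K/\gamma)$ otherwise. A second-moment/intersection argument across relays, accounting for the position-induced correlations, then yields the expected number of relays that are simultaneously doubly connected in each parameter regime.

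The four cases then follow by identifying which of two bottlenecks controls $p_{\text{DDF}}$: the \emph{fading bottleneck} (the probability that a relay's two independent fading values both clear their distance-dependent thresholds, which scales as $1/\gamma^2$ per relay) versus the \emph{mobility bottleneck} (an extra $q^{M-1}$ penalty reflecting the typical $M-1$ region transitions a relay must make before its source- and destination-side geometries align favourably, inherited from the boundary structure of the walk). In the regime $\gamma/K\to 0$, the first branch $\ThetaL\bigl[(\gamma^2/(Kq^{M-1}))^{1/M}\bigr]$ arises when the mobility term dominates (corresponding to $\limK K/(q\gamma^2)>0$), and the second branch $\ThetaL(\gamma^2/K)$ arises when the fading term dominates ($\limK K/(q\gamma^2)=0$). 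In the regime $\gamma/K>0$, where the pool of source-connected relays itself is $\ThetaL(1)$ only with probability $\ThetaL(K/\gamma)$, the analogous dichotomy governed by $1/(q\gamma)$ produces the $\ThetaL[(\gamma/q^{M-1})^{1/M}]$ and $\ThetaL(\gamma)$ branches respectively.

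The main obstacle is the matching lower bound. The upper bound on $p_{\text{DDF}}$ follows from a routine union bound over the $K$ relays, but lower-bounding the probability that some relay sits in a doubly-connected configuration requires decoupling the mobility-induced correlations across frames from the independent per-frame fading trials. This forces a first-passage/hitting-time analysis of the random walk on the $M$ regions (which is what ultimately produces the $q^{M-1}$ factor) combined with a Chernoff or second-moment estimate on a sum of correlated Bernoullis whose conditional mean depends on the current region vector $(X_1,\ldots,X_K)$. This argument must be rerun separately depending on whether the expected number of doubly-connected relays per frame is $o(1)$, $\ThetaL(1)$, or $\omega(1)$, and it is precisely this switch in the dominant regime that dictates the four-way case split in the statement.
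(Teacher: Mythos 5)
Your high-level skeleton --- reduce stability of the whole system to the source queue's mean service time via a domination argument, then evaluate that service time in four regimes --- matches the paper, which simply combines the stability argument of Appendix E with the service-time computation of Lemma \ref{lem:delay_relay_network}. But your model of the DDF service is wrong in a way that blocks the derivation of the stated expressions. In this paper's DDF a packet is \emph{not} served by a single-frame ``doubly connected relay'' event: phases I and II are inseparable only in the sense that the source cannot start a new packet until the current one reaches the destination, while phase II itself can take many frames, because the relays that decoded the packet sit near the source (in region 1 when $\gamma$ is large) and must random-walk across the $M$ regions before they have any chance of a connected link to the destination. Hence $\overline{D}_S=\overline{\rho}+\overline{\eta}$ with $\overline{\eta}$ a first-passage-type quantity, not $1/p_{\text{DDF}}$ for a per-frame success probability. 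Your static union/second-moment count of simultaneously doubly connected relays would give $\overline{D}_S=\ThetaL(\gamma^2/K)$ in the dense regime and cannot produce the $q^{M-1}$ factor nor, more importantly, the $1/M$ exponent.

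The missing idea is the self-consistency (fixed-point) argument of Appendix D: after $\overline{\eta}$ frames of diffusion the number of packet-carrying relays in region $M$ is $N_M=\frac{K}{\gamma}(q\overline{\eta})^{M-1}$, the per-frame delivery probability is $\min\{1,\ThetaL(N_M/\gamma)\}$, so the residual wait is $W(\overline{\eta})=\max\{1,\ThetaL(\gamma/N_M)\}$, and imposing $\overline{\eta}=\ThetaL(W(\overline{\eta}))$ yields $\overline{\eta}^{\,M}=\ThetaL\bigl(\gamma^2/(Kq^{M-1})\bigr)$, whence the $(\cdot)^{1/M}$ branches. The condition $\limK K/(q\gamma^2)>0$ versus $=0$ then decides whether delivery occurs before or after the relay positions equilibrate (i.e.\ whether $\overline{\eta}=O(1/q)$), not which of a ``fading'' or ``mobility'' term dominates in a single frame. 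You do gesture at a hitting-time analysis at the end, but as written your reduction to a per-frame Bernoulli trial contradicts it, and no step in your outline actually produces the $M$-th root. Your claim that relay buffers are flushed within the same frame is likewise false for this DDF (a relay can hold the packet for $\overline{\eta}\gg 1$ frames), although the conclusion that the relay queues do not threaten stability survives via the paper's fictitious dominating-queue argument.
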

\begin{proof}
Using Lemma \ref{lem:delay_relay_network} in Appendix C, the results
in Lemma \ref{lem:stability_base} follows using similar argument as
in Appendix E.
\end{proof}

The following Corollary summarizes the performance gain of the OBDWF
protocol in stability region.
\begin{Corollary}[Stability Region Comparison]
\label{cor:stability} Let $\lambda_s^*(\text{OBDWF})$ and
$\lambda_s^*(\text{DDF})$ be the maximum source arrival rate the
system can support and maintain queue stability using OBDWF and
conventional DDF protocol respectively. For sufficiently large $K$,
the gain on the stability region is given by:
\begin{equation}
\label{eq:lambda_cmp}\begin{array}{l}
\frac{\lambda_s^*(\text{OBDWF})}{\lambda_s^*(\text{DDF})} =
\\
\left\{
\begin{array}{ll}
\begin{array}{l}\max \big\{ \ThetaL(1), \\
\ThetaL \big[(\frac{\gamma^2}{Kq^{M-1}})^{\frac{1}{M}} \big] \big\} \end{array}
& \text{If } \limK  \frac{\gamma}{K}=0 \text{ and }\limK\frac{K}{q\gamma^2}>0 \\
\ThetaL (\frac{\gamma^2}{K}) & \text{If } \limK  \frac{\gamma}{K}=0
\text{ and } \limK\frac{K}{q\gamma^2} =0\\
\ThetaL \left(K(\gamma q)^{\frac{1-M}{M}}\right) & \text{If }\limK  \frac{\gamma}{K}>0 \text{ and } \limK\frac{1}{q\gamma}>0  \\
\ThetaL (K) & \text{If }\limK \frac{\gamma}{K}>0 \text{ and }
\limK\frac{1}{q\gamma} =0
\end{array} \right.
\end{array}
\end{equation}
\end{Corollary}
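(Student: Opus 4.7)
The plan is to read off both maximum arrival rates directly from the preceding stability results and then perform a case-by-case asymptotic simplification of their ratio. Theorem \ref{thm:stability} yields $\lambda_s^*(\text{OBDWF}) = 1/\zeta$ with $\zeta = \max\{\ThetaL(1), \ThetaL(\gamma/K)\}$, and Lemma \ref{lem:stability_base} yields $\lambda_s^*(\text{DDF}) = 1/\overline{D}_S$ with the four-branch $\overline{D}_S$ defined in (\ref{eq:DR_maintext}). Hence the quantity to be computed is simply $\overline{D}_S/\zeta$, and no additional probabilistic or queueing argument is required beyond those two results.

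The first step I would carry out is to collapse $\zeta$ using the dichotomy that determines which branch of the max dominates. If $\limK \gamma/K = 0$, then $\gamma/K = o(1)$ and so $\zeta = \ThetaL(1)$; if $\limK \gamma/K > 0$, then $\gamma/K$ is bounded below by a positive constant and so $\zeta = \ThetaL(\gamma/K)$. This dichotomy aligns exactly with the primary branching in (\ref{eq:DR_maintext}), so in each of the four regimes $\zeta$ already has a clean closed form.

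The second step is to divide $\overline{D}_S$ by $\zeta$ in each regime. In the two subcases with $\limK \gamma/K = 0$, we have $\zeta = \ThetaL(1)$, so the ratio equals $\overline{D}_S$ verbatim, which gives the first two lines of (\ref{eq:lambda_cmp}) immediately. In the two subcases with $\limK \gamma/K > 0$, dividing by $\zeta = \ThetaL(\gamma/K)$ amounts to multiplying $\overline{D}_S$ by $K/\gamma$. For $\limK 1/(q\gamma) > 0$, this gives $(K/\gamma)\cdot \ThetaL[(\gamma/q^{M-1})^{1/M}] = \ThetaL(K \gamma^{(1-M)/M} q^{(1-M)/M}) = \ThetaL(K(\gamma q)^{(1-M)/M})$, matching the third line; for $\limK 1/(q\gamma) = 0$, this gives $(K/\gamma)\cdot \ThetaL(\gamma) = \ThetaL(K)$, matching the fourth line.

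The main obstacle, such as it is, is purely bookkeeping of asymptotic orders in the third regime: one has to combine the $1/M$ exponent coming from the $M$-th root of $\gamma/q^{M-1}$ with the $-1$ coming from the $K/\gamma$ factor and the $-(M-1)/M$ coming from $q^{-(M-1)/M}$, and then observe the telescoping $-(M-1)/M = (1-M)/M$ to collapse the expression into the compact form $K(\gamma q)^{(1-M)/M}$. Beyond that, the proof is a transparent substitution of Theorem \ref{thm:stability} and Lemma \ref{lem:stability_base}.
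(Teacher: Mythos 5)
Your proposal is correct and is exactly the argument the paper intends: the corollary is stated without a separate proof precisely because it is the ratio $\overline{D}_S/\zeta$ obtained by substituting Theorem \ref{thm:stability} and Lemma \ref{lem:stability_base}, and your case-by-case simplification (including the exponent bookkeeping $(1/M)-1 = (1-M)/M$ in the third regime) checks out.
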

\begin{Rem}[Interpretation of the Corollary \ref{cor:stability}]
Note that when the packet size $\overline{N}$ is large such that
$\gamma
>\ThetaL(\psi)$, then the OBDWF protocol can obtain an order-wise
gain. For example, when $\gamma=\ThetaL(K)$ and $q=\ThetaL(1)$, then
$\frac{\lambda_s^*(\text{OBDWF})}{\lambda_s^*(\text{DDF})}=\ThetaL(K)$.
~\hfill\IEEEQED
\end{Rem}

\subsection{Delay Performance}
In this section, we shall compare the average end-to-end packet
delay performance. The average end-to-end packet delay of the relay
network is defined below.
\begin{Def}[Average End-to-End Packet Delay]\label{def:avg_delay}
Let $I_i^a$ and $I_i^r$ be the frame indices of the $i$-th packet
arrival at the source node and the $i$-th packet successfully
received at the destination node respectively. The average
end-to-end packet delay\footnote{Note that it is implicitly assumed
that the system of queues are stable when we discuss average delay
because otherwise, the probability measure behind the {\em
expectation} is not defined.} of the relay network is defined as
$\overline{D} =
\lim_{\kappa\rightarrow\infty}\frac{\sum_{i=1}^\kappa
I_i^r-I_i^a}{\kappa}$. ~\hfill \IEEEQED
\end{Def}

The following Theorem summarizes the average delay performance of
the proposed OBDWF protocol.
\begin{Theorem}{\em (Average End-to-End Packet Delay of the OBDWF Protocol):}\label{thm:avg_delay}
For sufficiently large $K$ and $\lambda_s$ in the stability region
in Theorem \ref{thm:stability}, the average end-to-end packet delay
of the OBDWF protocol is given by:
\begin{equation}
\label{eq:delay_obdmf} \overline{D}_{\text{OBDWF}}=\max\left\{
\ThetaL\left(\frac{\lambda_s^{(2)}\zeta}{\lambda_s(1-\lambda_s\zeta)}\right),\ThetaL(\overline{D}_S)\right\}
\end{equation}
where $\zeta$ is given by (\ref{eq:zeta}) and $\overline{D}_S$ is
given by (\ref{eq:DR_maintext}). ~\hfill\IEEEQED
\end{Theorem}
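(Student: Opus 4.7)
The plan is to decompose the end-to-end delay experienced by a tagged packet into two additive components, analyze each, and combine.

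First, I would write
\begin{equation*}
\overline{D}_{\text{OBDWF}} = \overline{D}_{\text{src}} + \overline{D}_{\text{relay}},
\end{equation*}
where $\overline{D}_{\text{src}}$ is the sojourn time in the source buffer (from arrival until the packet is successfully handed off to the relay network) and $\overline{D}_{\text{relay}}$ is the time from that first successful hand-off until the destination decodes the packet.

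For $\overline{D}_{\text{src}}$, I would model $Q_s(t)$ as a discrete-time single-server queue. The arrival process $A_s(t)$ is i.i.d.\ with mean $\lambda_s$ and second moment $\lambda_s^{(2)}$ by Definition \ref{ass:source_model}, while Theorem \ref{thm:stability} (Appendix E) identifies $\zeta$ from (\ref{eq:zeta}) as the mean source service time under OBDWF and $\lambda_s\zeta<1$ as the stability condition. In this stable regime, applying a Pollaczek--Khinchine/Kingman-type formula together with Little's law yields
\begin{equation*}
\overline{D}_{\text{src}} = \ThetaL\!\left(\frac{\lambda_s^{(2)}\zeta}{\lambda_s(1-\lambda_s\zeta)}\right).
\end{equation*}

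For $\overline{D}_{\text{relay}}$, I would argue that once a packet has been buffered at one or more relays, the expected number of frames until the destination successfully decodes it is $\ThetaL(\overline{D}_S)$, with $\overline{D}_S$ as in (\ref{eq:DR_maintext}). This is essentially the content of the single-packet relay-network transit analysis used for Lemma \ref{lem:stability_base} (Appendix C, via Lemma \ref{lem:delay_relay_network}): the same random-walk mixing time and connected-link probabilities from Lemma \ref{lem:relay_decode} apply unchanged under OBDWF, because after the hand-off the phase~II transmission proceeds identically to the conventional relay-to-destination phase. Combining the two components and using $a+b = \ThetaL(\max(a,b))$ for positive $a,b$ then produces the claimed $\max$ expression.

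The main obstacle is justifying the GI/G/1 step. The source service process under OBDWF is \emph{not} i.i.d., since the source transmits only in frames where no relay has a buffered packet with a connected link to the destination; hence the inter-service epochs at the source couple non-trivially to the joint state of the $K$ relay buffers and of the mobility chain. To rigorously inherit the Pollaczek--Khinchine-type bound I would exploit the geometric ergodicity of the per-relay random walk (guaranteed by $q>0$) and the independence across the $K$ relays to produce a mixing/coupling argument showing that, in steady state, the source inter-service sequence can be stochastically dominated by an i.i.d.\ sequence with the same $\ThetaL(\zeta)$ mean. Because only order-wise ($\ThetaL$) tightness is claimed, this domination is sufficient to transfer the standard GI/G/1 waiting-time bound up to constant factors; the same mixing argument also makes precise that the two delay components $\overline{D}_{\text{src}}$ and $\overline{D}_{\text{relay}}$ can be added in the $\ThetaL$ sense without double-counting idle frames.
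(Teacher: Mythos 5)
Your proposal follows essentially the same route as the paper's Appendix F: the same decomposition $\overline{D}_{\text{OBDWF}}=\overline{D}_Q+\overline{D}_R$, a discrete-time M/G/1 (Pollaczek--Khinchine) analysis with Little's law for the source buffer using mean service time $\zeta$ from Theorem \ref{thm:stability}, and reuse of the Lemma \ref{lem:delay_relay_network} transit analysis to get $\overline{D}_R=\ThetaL(\overline{D}_S)$. The only difference is that you explicitly flag and propose to repair the non-i.i.d.\ source service-time issue, which the paper passes over silently by directly applying the i.i.d.-service queueing formula; this is a refinement of the same argument rather than a different approach.
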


\begin{proof}
Please refer to Appendix F.
\end{proof}

\begin{Rem}[Interpretation of Theorem \ref{thm:avg_delay}]
The first term in the RHS of (\ref{eq:delay_obdmf}) is the average
waiting time (number of frames) that the packets stays in the source
buffer. It is affected by the source arrival model, i.e.,
$\lambda_s$ and $\lambda_s^{(2)}$. The second term $\overline{D}_S =
\ThetaL( \overline{D}_R)$, where $\overline{D}_R$ is the average
time the packet stays in the relay network. This factor depends on
both the packet size $\overline{N}$ and the mobility of the network
($q$). Fig. \ref{fig:delay_order} further illustrates the asymptotic
relationship between $\overline{D}_S$, $\gamma$ and $q$.
Specifically, the x-axis is $\frac{\log\gamma}{\log K}$, and the
y-axis is $\frac{\log\overline{D}_S}{\log K}$. Observe that
$\overline{D}_S$ is an increasing function of $\gamma$ and $1/q$.
~\hfill \IEEEQED
\end{Rem}

\begin{figure}
 \begin{center}
  \resizebox{9cm}{!}{\includegraphics{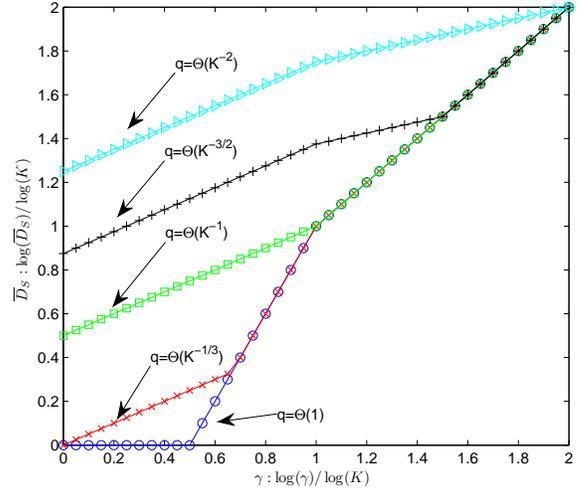}}
 \end{center}
    \caption{The asymptotic
        relationship between $\overline{D}_S$, $\gamma$ and $q$. Specifically, the
        x-axis is $\frac{\log(\gamma)}{\log(K)}$, and the y-axis is
        $\frac{\log(\overline{D}_S)}{\log(K)}$.}
    \label{fig:delay_order}
\end{figure}

Similarly, the delay performance of the conventional DDF protocol is
summarized in the following Lemma:
\begin{Lemma}{\em (Average End-to-End Packet Delay of the Conventional DDF Protocol):}\label{lem:avg_delay_base}
For sufficiently large $K$ and $\lambda_s$ in the stability region
in Lemma \ref{lem:stability_base}, the average end-to-end packet
delay under the conventional DDF protocol is given by:
\begin{equation}
\overline{D}_{\text{DDF}} =
\ThetaL\left(\frac{\lambda_s^{(2)}\overline{D}_S}{\lambda_s(1-\lambda_s\overline{D}_S)}\right)
\end{equation}
where $\overline{D}_S$ is given by (\ref{eq:DR_maintext}).
~\hfill\IEEEQED
\end{Lemma}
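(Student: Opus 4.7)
The plan is to reduce the conventional DDF system to a single-server queue with batch Bernoulli-type arrivals and a random service time whose first moment is $\overline{D}_S$, and then apply a discrete-time Pollaczek--Khinchine-style argument. The crucial observation is that in the conventional DDF protocol (unlike OBDWF), phase I and phase II of the same packet constitute an inseparable atomic action, so at most one packet is in transit through the relay network at any time, and the packet is only dequeued from the source buffer when it is delivered to the destination. Consequently, from the point of view of the source buffer $Q_s(t)$, the relay network behaves as a single server whose service time for packet $i$ is a random variable $D_i$ with $\mathbb{E}[D_i] = \overline{D}_S$, where $\overline{D}_S$ is the quantity derived in Lemma~\ref{lem:stability_base} (and whose proof, via Lemma~\ref{lem:delay_relay_network}, already characterizes the average service-completion time, order-wise).

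First I would formalize the $\mathrm{GI}^{[X]}/\mathrm{G}/1$ reduction. Because relays must both be reached by the source (phase I) and then deliver to the destination (phase II), and these are welded into one atomic transaction, the $K$ relay buffers do not build up packets beyond the single packet currently being served; they function as a staging area for the single in-service packet. Hence $Q_k(t)\in\{0,1\}$ for every relay $k$ during the service of a packet, and the only queue that can accumulate is $Q_s(t)$. The system dynamics reduce to $Q_s(t+1) = A_s(t+1) + [Q_s(t)-\mathbf{1}\{\text{service completed at }t\}]^+$ with i.i.d.\ service times $D_i$ of mean $\overline{D}_S$.

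Next I would apply the standard mean-waiting-time formula for a discrete-time queue with batch i.i.d.\ arrivals and general service time. Writing $\rho = \lambda_s\overline{D}_S$, the average waiting time in the source buffer (in frames) satisfies, order-wise,
\begin{equation}
\overline{W}_{\text{DDF}} \;=\; \ThetaL\!\left(\frac{\lambda_s^{(2)}\,\overline{D}_S}{\lambda_s\,(1-\lambda_s \overline{D}_S)}\right),
\end{equation}
which follows by the same Pollaczek--Khinchine / Kingman-type mean-value analysis used in Appendix~F for OBDWF, with the single modification that the effective per-packet service time is $\overline{D}_S$ rather than $\zeta$. The second moment of the batch arrivals $\lambda_s^{(2)}$ enters through the variance of the backlog increment. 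The end-to-end delay is then $\overline{D}_{\text{DDF}} = \overline{W}_{\text{DDF}} + \mathbb{E}[D_i]$, and since stability forces $\lambda_s \overline{D}_S<1$, the waiting-time term absorbs the service-time term order-wise (no max with $\overline{D}_S$ is needed as in OBDWF, because here $\overline{D}_S$ already appears inside the fraction and dominates it when $\rho$ is bounded away from $1$). This yields the claimed expression.

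The main obstacle is justifying the reduction to a single-server queue rigorously and, in particular, arguing that the service times $D_i$ can be treated as i.i.d.\ with a well-defined mean $\overline{D}_S$ in the order-wise sense. This requires invoking the mobility mixing guaranteed by $q>0$ and the i.i.d.\ fading assumption to ensure that successive service completions decouple asymptotically; this is essentially what is done in Lemma~\ref{lem:delay_relay_network} (cited in Appendix~C). Given that technical step, the rest of the argument is a direct transcription of the Appendix~F calculation for OBDWF with $\zeta$ replaced by $\overline{D}_S$, so I would not grind through it again but simply point to the parallel derivation.
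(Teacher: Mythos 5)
Your reduction of conventional DDF to a single-server queue at the source buffer with i.i.d.\ service times of mean $\overline{D}_S$ is exactly the paper's starting point, and the final Pollaczek--Khinchine step matches Appendix~F. However, there is a genuine gap: the discrete-time P--K mean-delay formula used in Appendix~F,
\begin{equation*}
\overline{D}_Q=\frac{\lambda_s^2b^{(2)}-\lambda_s^2b-\lambda_s b+\lambda_s^{(2)}b}{2\lambda_s(1-\lambda_s b)}+b,
\end{equation*}
depends on the \emph{second} moment $b^{(2)}=\mathbb{E}[D_S^2]$ of the service time, not only on $b=\overline{D}_S$. Your proposal never controls $b^{(2)}$; you only discuss the mean of $D_i$ and the second moment of the \emph{arrivals}. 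If the service-time distribution were heavy-tailed so that $b^{(2)}\gg(\overline{D}_S)^2$, the term $\lambda_s^2 b^{(2)}$ would dominate the numerator and the delay would be order-wise larger than claimed, so the statement does not follow from ``same formula with $\zeta$ replaced by $\overline{D}_S$'' alone. Establishing $b^{(2)}=\ThetaL\left((\overline{D}_S)^2\right)$ is in fact the entire substantive content of the paper's Appendix~G: it splits $b^{(2)}$ into contributions from service times of order $o(\overline{D}_S)$, $\ThetaL(\overline{D}_S)$, and larger, and kills the upper tail by a geometric-decay argument (e.g., in the regime $\limK\frac{\gamma}{K}=0$, $\limK\frac{K}{q\gamma^2}<\infty$, once $\ThetaL(\frac{K}{\gamma})$ relays carrying the packet reach region $M$, each frame offers a connection probability $\ThetaL(\frac{K}{\gamma^2})$, so $\Pr\{D_S=g(K)\}g^2(K)=\ThetaL\left((1-K/\gamma^2)^{g(K)}g^2(K)\right)=o\left((\gamma^2/K)^2\right)$ for any $g(K)>\ThetaL(\gamma^2/K)$).

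A secondary, smaller imprecision: the claim that ``no max with $\overline{D}_S$ is needed'' is ultimately true, but for the reason the paper gives --- $\lambda_s^{(2)}\geq\lambda_s$ for integer-valued arrivals, so the waiting-time fraction already dominates $\overline{D}_S/(1-\lambda_s\overline{D}_S)$ --- not because ``$\rho$ is bounded away from $1$,'' which is not assumed anywhere in the stability region. To complete your argument you would need to add the tail bound on $D_S$ (in each of the four regimes of (\ref{eq:delay_RS})) before invoking the P--K formula.
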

\begin{proof}
Please refer to the Appendix G.
\end{proof}

The following Corollary summarizes the average delay gain of the
OBDWF protocol.
\begin{Corollary}[Average Delay Comparison]
\label{cor:delay} For sufficiently large $K$ and $\lambda_s$ in the
stability region in Lemma \ref{lem:stability_base}, the average
delay gain of the OBDWF and the conventional DDF protocols is given
by:
\begin{equation}
\label{eq:delay_cmp}\begin{array}{l}
\frac{\overline{D}_{\text{DDF}}}{\overline{D}_{\text{OBDWF}}} =\\
\quad\quad\min\left\{\ThetaL\left(\frac{\lambda_s^{(2)}}{\lambda_s(1-\lambda_s\overline{D}_S)}\right),\ThetaL\left(\frac{\overline{D}_S(1-\lambda_s\zeta)}{\zeta(1-\lambda_s\overline{D}_S)}\right)
\right\}
\end{array}
\end{equation}
where $\zeta$ is given by (\ref{eq:zeta}) and
$\ThetaL(\overline{D}_S)$ is given by (\ref{eq:DR_maintext}).
~\hfill\IEEEQED
\end{Corollary}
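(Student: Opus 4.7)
The plan is almost entirely algebraic: substitute the two closed-form delay expressions that have already been proved and simplify the ratio. From Theorem \ref{thm:avg_delay} I have $\overline{D}_{\text{OBDWF}}=\max\left\{\ThetaL\!\left(\frac{\lambda_s^{(2)}\zeta}{\lambda_s(1-\lambda_s\zeta)}\right),\ThetaL(\overline{D}_S)\right\}$, and from Lemma \ref{lem:avg_delay_base} I have $\overline{D}_{\text{DDF}}=\ThetaL\!\left(\frac{\lambda_s^{(2)}\overline{D}_S}{\lambda_s(1-\lambda_s\overline{D}_S)}\right)$. The hypothesis that $\lambda_s$ lies in the DDF stability region (which is the stricter of the two, since Corollary \ref{cor:stability} shows OBDWF's stability region dominates DDF's order-wise) ensures $1-\lambda_s\overline{D}_S>0$ and, \emph{a fortiori}, $1-\lambda_s\zeta>0$, so every quantity in sight is positive and the $\ThetaL(\cdot)$ bounds can be divided.

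First I would form the ratio $\overline{D}_{\text{DDF}}/\overline{D}_{\text{OBDWF}}$ and pull the denominator's $\max$ out using the elementary identity $\frac{x}{\max\{a,b\}}=\min\!\left\{\frac{x}{a},\frac{x}{b}\right\}$ valid for positive $a,b,x$. This immediately produces a $\min$ of two $\ThetaL(\cdot)$ terms. The first is obtained by dividing by the ``$\overline{D}_S$'' branch of the max, which cancels the $\overline{D}_S$ in the numerator of $\overline{D}_{\text{DDF}}$ and yields $\ThetaL\!\left(\frac{\lambda_s^{(2)}}{\lambda_s(1-\lambda_s\overline{D}_S)}\right)$. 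The second is obtained by dividing by the ``$\frac{\lambda_s^{(2)}\zeta}{\lambda_s(1-\lambda_s\zeta)}$'' branch; the $\lambda_s^{(2)}/\lambda_s$ factors cancel and the remaining fractions combine into $\ThetaL\!\left(\frac{\overline{D}_S(1-\lambda_s\zeta)}{\zeta(1-\lambda_s\overline{D}_S)}\right)$. These are exactly the two terms appearing in (\ref{eq:delay_cmp}).

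There is no genuine obstacle here: the argument is purely symbolic and relies on nothing beyond the previously proved delay formulas, positivity of $1-\lambda_s\zeta$ and $1-\lambda_s\overline{D}_S$ in the relevant stability region, and the fact that the $\ThetaL(\cdot)$ class is closed under multiplication, division, and the $\min/\max$ operations applied to finitely many positive terms. The only point that warrants a sentence of justification in the write-up is the $1/\max=\min(1/\cdot)$ step together with the positivity of the factors $1-\lambda_s\zeta$ and $1-\lambda_s\overline{D}_S$; after that the two simplifications are one-line algebraic identities.
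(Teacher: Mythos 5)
Your proposal is correct and coincides with the paper's intended argument: the paper offers no separate proof of this corollary, treating it as the immediate quotient of the DDF delay expression in Lemma \ref{lem:avg_delay_base} by the OBDWF delay expression in Theorem \ref{thm:avg_delay}, using exactly the $\frac{x}{\max\{a,b\}}=\min\{\frac{x}{a},\frac{x}{b}\}$ step you describe. Your added remarks on positivity of $1-\lambda_s\overline{D}_S$ and $1-\lambda_s\zeta$ within the DDF stability region are the right (and only) hypotheses needed to make the division legitimate.
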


\begin{Rem}[Interpretation of the Corollary
\ref{cor:delay}] There are several scenarios that the OBDWF protocol
will have significant order-wise gain on the delay performance. For
example, when $\lambda_s$ is close to the service rate
$1/\overline{D}_S$, i.e., $\lambda_s= \frac{1}{\overline{D}_S} -
\epsilon$ where $\epsilon = o(\frac{1}{\overline{D}_S})$, we have
$\frac{\overline{D}_{\text{DDF}}}{\overline{D}_{\text{OBDWF}}} =
\ThetaL(\frac{1}{\epsilon\overline{D}_S})>\ThetaL(1)$. On the other
hand, even if $\lambda_s$ is not so close to
$\frac{1}{\overline{D}_S}$, i.e.,
$1-\lambda_s\overline{D}_S=\ThetaL(1)$, there will still be
order-wise gain as long as $\frac{\lambda_s}{\lambda_s^{(2)}}= o(1)$
and $\frac{1}{\overline{D}_S}=o(1)$. Specifically, if
$\Pr\{A_s(t)=\ThetaL(K)\}=\nu$, $\Pr\{A_s(t)=0\}=1-\nu$, where
$\nu=\ThetaL(\frac{1}{K^2})$, $\gamma=\ThetaL(K)$ and
$q=\ThetaL(1)$, then we can obtain
$\frac{\overline{D}_{\text{DDF}}}{\overline{D}_{\text{OBDWF}}} =
\ThetaL(K)$ by (\ref{eq:delay_cmp}). ~\hfill\IEEEQED
\end{Rem}

\section{Simulation Results and Discussion}\label{sec:sim}

\begin{table*}
\begin{center}
\caption{Elaboration of the baselines. Specifically, baseline 1
refers to {\em conventional DDF} protocol\cite{DDF:2003}, baseline 2
refers to {\em conventional AF} protocol\cite{AF:sel}, baseline 3
refers to {\em AF with spatial combining} and baseline 4 refers to
{\em DF with Spatial Combining}.}
\begin{tabular}{|p{2.5cm}|p{12.5cm}|}
\hline Baseline Name                        & \hspace{5cm} Description \\
\hline Baseline 1

        (Conventional DDF) &
$\bullet$ The source node broadcasts a packet from the buffer at the
beginning of the frame until at least one relay or destination node
returns with an ACK.

$\bullet$ If the destination node returns with an ACK, the source
node start to broadcast a new packet; If the relay node returns with
an ACK, the source node stops broadcasting and the relay node
forward the packet to the destination node in the next frame.
\\
\hline  Baseline 2

(Conventional AF)      &

$\bullet$ The source node broadcasts a packet from the buffer at the
beginning of a frame.

$\bullet$ All the relays listen and store the received samples from
the source during the listening phase and the relay with the largest
metric ($\frac{P^2S_{s,j}S_{j,d}}{PS_{s,j}+PS_{j,d}+1}$, where
$S_{s,j}=\frac{H_{s,j}}{d_{s,j}^{\alpha}}$ and
$S_{j,d}=\frac{H_{j,d}}{d_{j,d}^{\alpha}}$ ) is selected to amplify
and forward to the destination node in the next frame.
\\
\hline Baseline 3

(AF with Spatial Combining) & $\bullet$ The source node broadcasts a
packet from the buffer at the beginning of a frame.

$\bullet$ All the relays listen and store the received samples from
the source during the listening phase and $N_A$ relays with the
largest metric ($\frac{P^2S_{s,j}S_{j,d}}{PS_{s,j}+PS_{j,d}+1}$,
where $S_{s,j}=\frac{H_{s,j}}{d_{s,j}^{\alpha}}$ and
$S_{j,d}=\frac{H_{j,d}}{d_{j,d}^{\alpha}}$ ) are selected to amplify
and forward to the destination node in the next frame.
\\
\hline Baseline 4

(DF with Spatial Combining) & $\bullet$ The source node broadcasts a
packet from the buffer at the beginning of the frame until at least
$N_D$ relay nodes or destination node return with an ACK.

$\bullet$ If the destination node returns with an ACK, the source
node starts to broadcast a new packet; If at least $N_D$ relay nodes
return with an ACK, the source node stops broadcasting and all the
relay nodes that have decoded the packet from the source node will
forward the packet to the destination node in the next frame.
\\
\hline
\end{tabular}
\label{tab:baselines}
\end{center}
\end{table*}

In this section, we shall compare the performance of the proposed
OBDWF protocol with various baseline schemes. Baseline 1 refers to
the {\em conventional DDF} protocol\cite{DDF:2003}, baseline 2
refers to the {\em conventional AF} protocol\cite{AF:sel}. {
Baseline 3 and 4 are extensions of Baseline 1 and 2 respectively
with spatial combining from multiple relays, which are elaborated in
Table \ref{tab:baselines}} We consider a system with the source node
at $(0,0)$ and the destination node at $(5,0)$. The $K$ relays are
randomly distributed between the source node and the destination
node, as shown in Fig. \ref{fig:system_model}. The movement of each
relay is given by the random walk mobility model in
(\ref{eq:random_walk}), where the number of relay mobility regions
is $M=5$. The small scale fading gain follows complex Guassian with
unit variance. The pass loss exponent $\alpha=4$, and the transmit
SNR $P=20$dB.  For bursty arrivals, we assume $\Pr\{A_s=15\} =
0.001$ and $\Pr\{A_s=0\} = 0.999$. This corresponds to an arrival
rate $\lambda_s = 0.015$. The packet size
$\overline{N}=W\tau\log_2K$, the frame duration $\tau = 5$ ms and
the bandwidth $W=1$MHz. Using Lemma \ref{lem:macro-base}, the
physical data rate at the source node is set to be $R=W\log_2K$,
which is the optimal choice for conventional DDF.

Fig. \ref{fig:thrput} illustrates the average end-to-end system
throughput $\overline{T}$ versus the number of relays $K$ at
different mobility $q$. Observe that the proposed OBDWF protocol has
significant gain compared with the baselines. Furthermore, the
performance of the OBDWF protocol is insensitive to the mobility of
the network $q$. Fig. \ref{fig:stability} illustrates the maximal
stable arrival rate $\lambda_s$ versus the number of relays $K$ at
different network mobility $q$ under the bursty source model.
Similar significant gains over the baselines can be observed. {
Moreover, it can be observed in these two figures that the
simulation results match with the theoretical results derived in
Section \ref{sec:thrput} and \ref{sec:delay}.}

{ Fig. \ref{fig:delay_bursty_finite} illustrates the average
end-to-end packet delay versus the number of relays $K$ at different
mobility $q$ with finite buffer length of 25 packets for all the
nodes. Note that, the delay performance is an increasing function of
$1/q$ for all protocols and there is also a significant gain of the
proposed OBDWF protocol. }

Fig. \ref{fig:delay_lambda} illustrates the average end-to-end
packet delay versus the average arrival rate $\lambda_s$ with
infinite buffer length for all the nodes. Note that the baseline 2
and 3 are not stable under the operating regime considered. The
delay performance of baseline 4 quickly blows up at $\lambda_s
=0.08$, which is the maximal stability input rate for this baseline.
On the other hand, the delay performance of the proposed OBDWF
protocol significantly out-performs all the baseline over the entire
range of traffic loading. { Fig. \ref{fig:thrput_new_mobility}
illustrates the average end-to-end system throughput $\overline{T}$
versus the number of relays $K$ under the {\em Random Waypoint
Model}, which is also widely used in
\cite{Mobility:1996,Mobility:1998,Mobility:1999}. Similar
performance gains can be observed.}

\begin{figure}
\begin{center}
  \subfigure[$q=1/10$ ]
  {\resizebox{8cm}{!}{\includegraphics{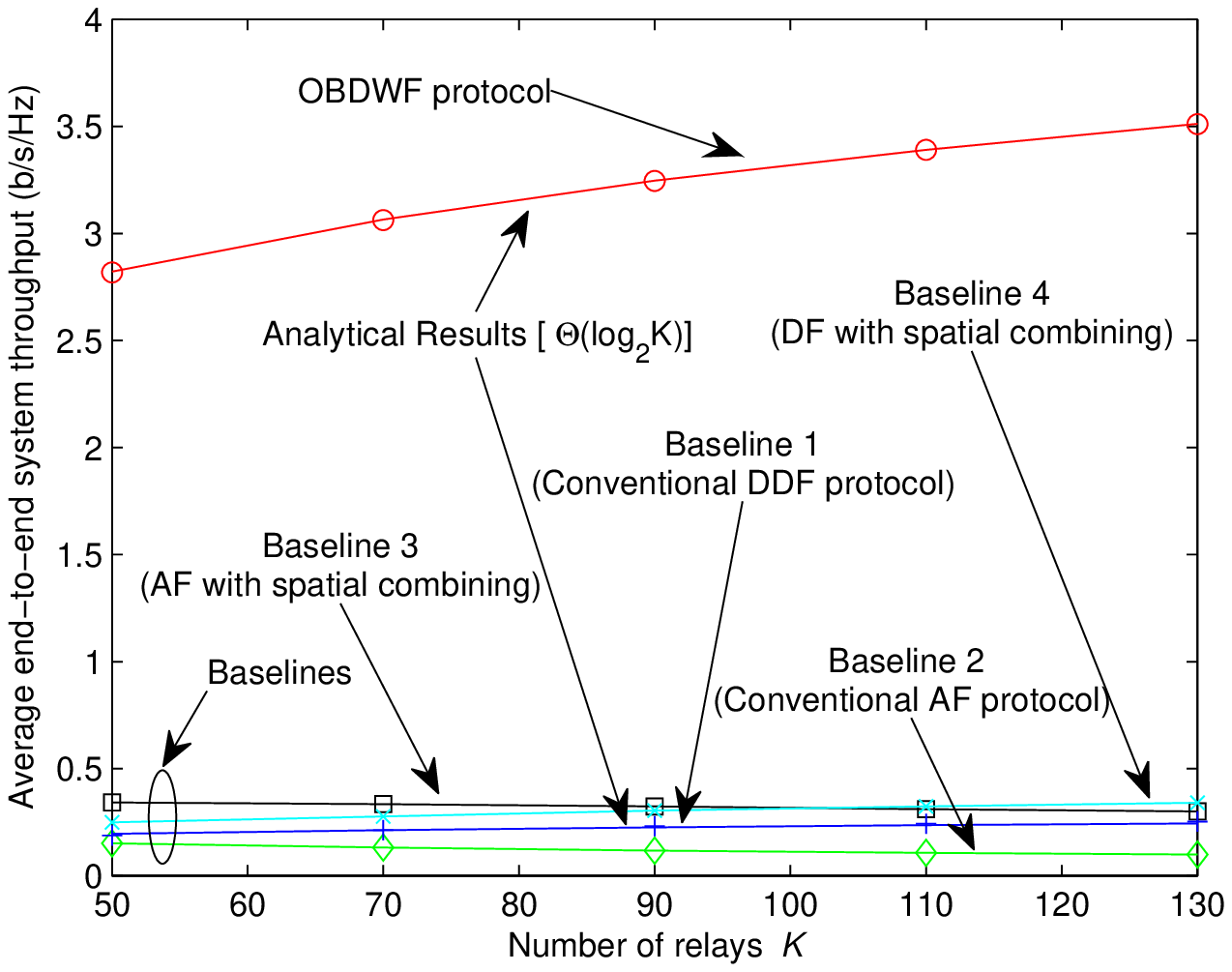}}}
  \subfigure[$q=1/5$ ]
  {\resizebox{8cm}{!}{\includegraphics{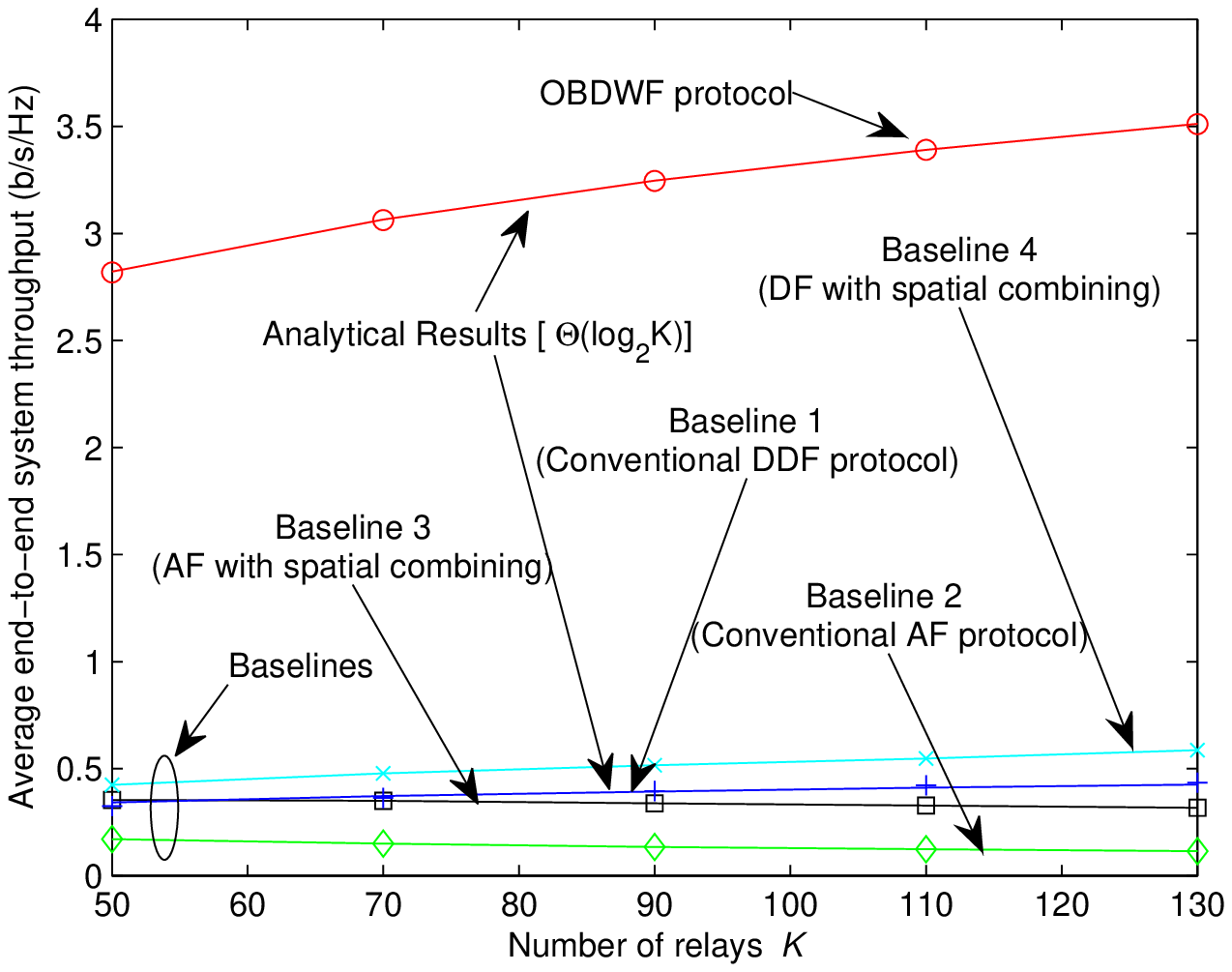}}}
  \end{center}
    \caption{ Average end-to-end system throughput $\overline{T}$ versus the number of relays $K$ at
different mobility $q$. The physical data rate is $R=W\log_2K$
according to Lemma \ref{lem:macro-base}, $(N_A,N_D) = (5,5)$ in the
baseline 3 and 4 respectively. The marks
``o,x,+,$\square,\diamond$'' denote simulation results and the solid
lines represent the analytical results for OBDWF and baseline 1
(conventional DDF) protocol.}
    \label{fig:thrput}
\end{figure}

\begin{figure}
\begin{center}
  \subfigure[$q=1/10$ ]
  {\resizebox{8cm}{!}{\includegraphics{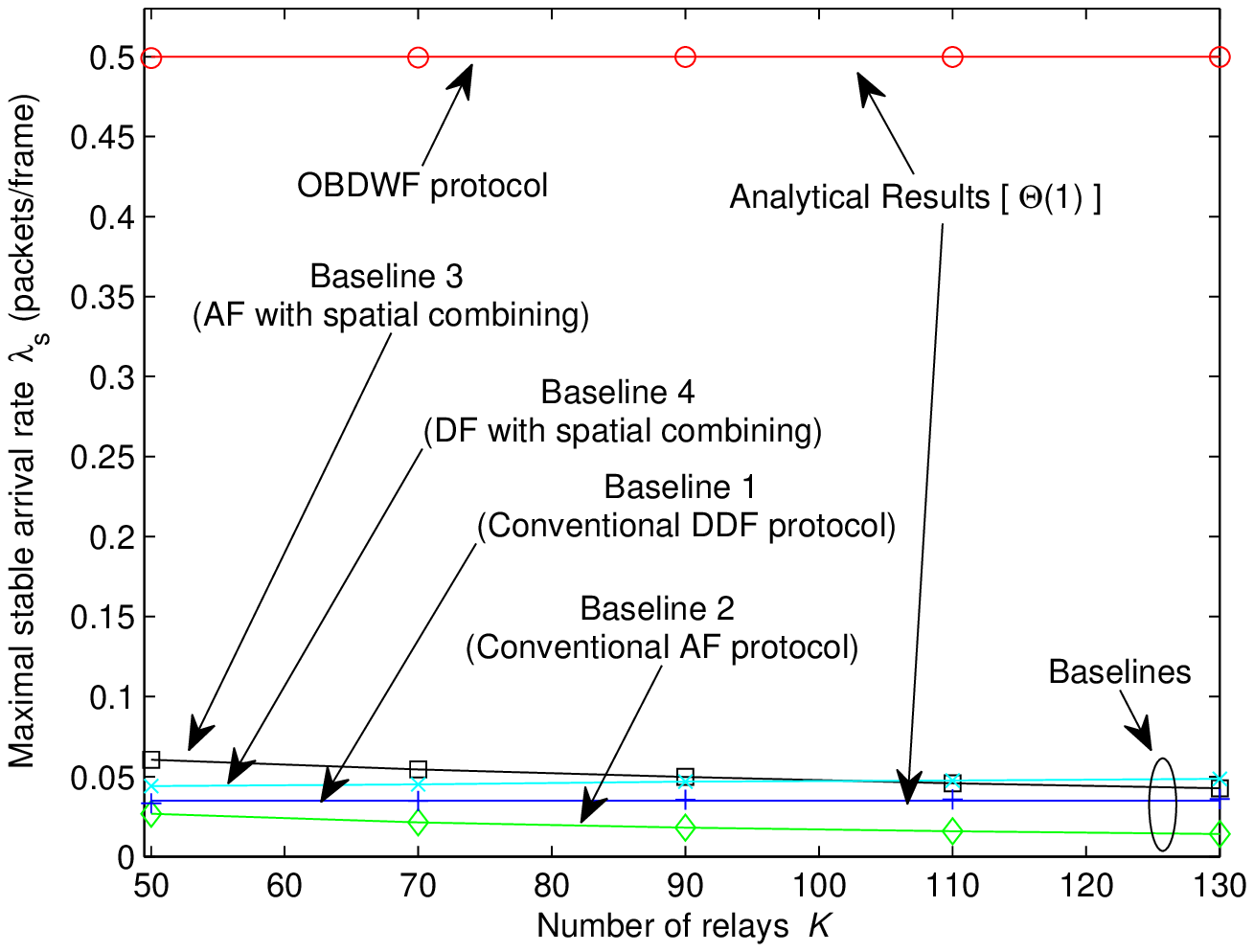}}}
  \subfigure[$q=1/5$ ]
  {\resizebox{8cm}{!}{\includegraphics{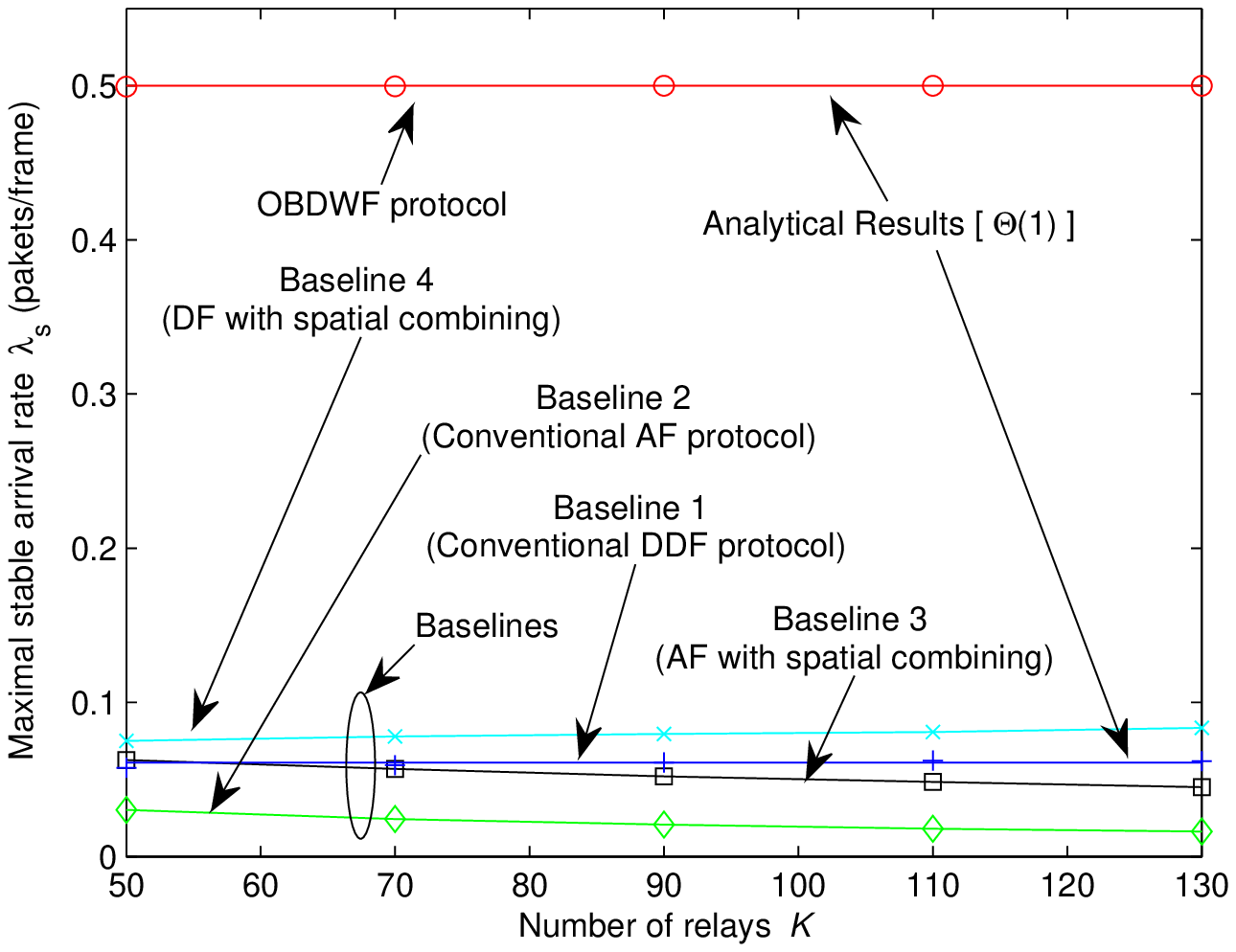}}}
  \end{center}
    \caption{ The maximal stable arrival rate $\lambda_s$ versus the number of
relays $K$ at different mobility $q$. The packet size is
$\overline{N}=W\tau\log_2K$, $(N_A,N_D) = (5,5)$ in the baseline 3
and 4 respectively. The marks ``o,x,+,$\square,\diamond$'' denote
simulation results and the solid lines represent the analytical
results for OBDWF and baseline 1 (conventional DDF) protocol. }
    \label{fig:stability}
\end{figure}

\begin{figure}
\begin{center}
  \subfigure[$q=1/10$ ]
  {\resizebox{8cm}{!}{\includegraphics{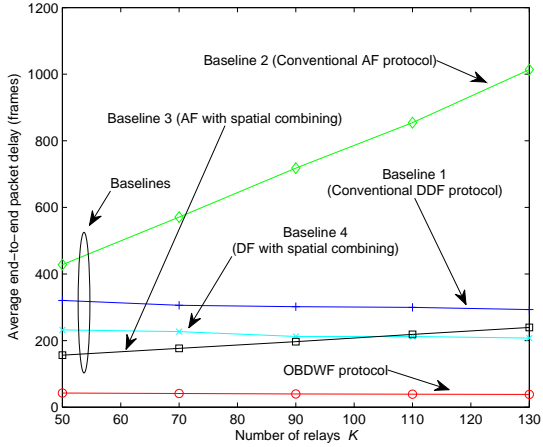}}}
  \subfigure[$q=1/5$ ]
  {\resizebox{8cm}{!}{\includegraphics{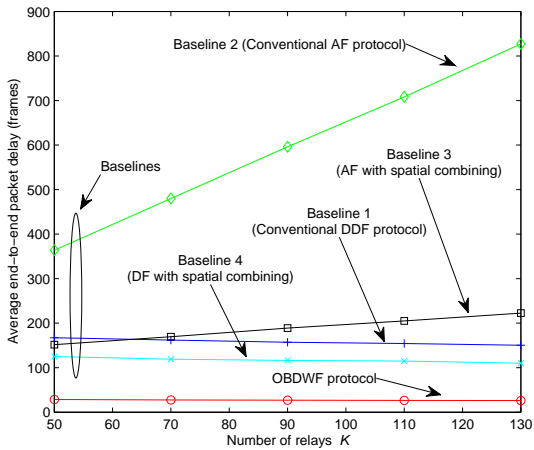}}}
 \end{center}
    \caption{
    The average end-to-end packet delay versus the number of
relays $K$ at different mobility $q$ with finite buffer length of 25
packets for all the nodes. The packet size is
$\overline{N}=W\tau\log_2K$, $(N_A,N_D) = (5,5)$ in the baseline 3
and 4 respectively. The source arrival model is given by
$\Pr\{A_s=15\} = 0.001$ and $\Pr\{A_s=0\} = 0.999$.
    }
    \label{fig:delay_bursty_finite}
\end{figure}

\begin{figure}
 \begin{center}
  \resizebox{8cm}{!}{\includegraphics{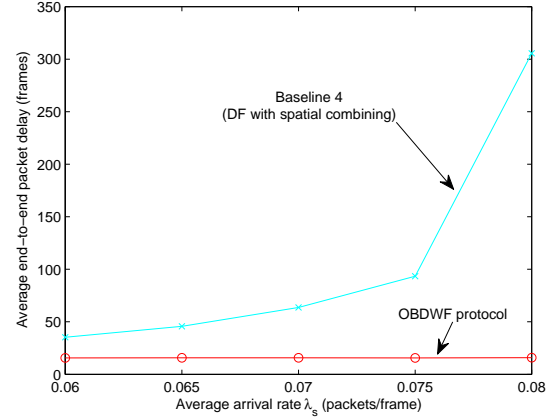}}
 \end{center}
    \caption{
    The average end-to-end
packet delay versus the average arrival rate $\lambda_s$. Note that
the other baselines are not stable under the operating regime
considered. The packet size is $\overline{N}=W\tau\log_2K$, $K=110$,
$q=1/5$ and $N_D=5$ in the baseline 4. The source arrival model is
given by $\Pr\{A_s=1\}=\lambda_s$, and $\Pr\{A_s=0\}=1-\lambda_s$.
    }
    \label{fig:delay_lambda}
\end{figure}

\begin{figure}
 \begin{center}
  \resizebox{8cm}{!}{\includegraphics{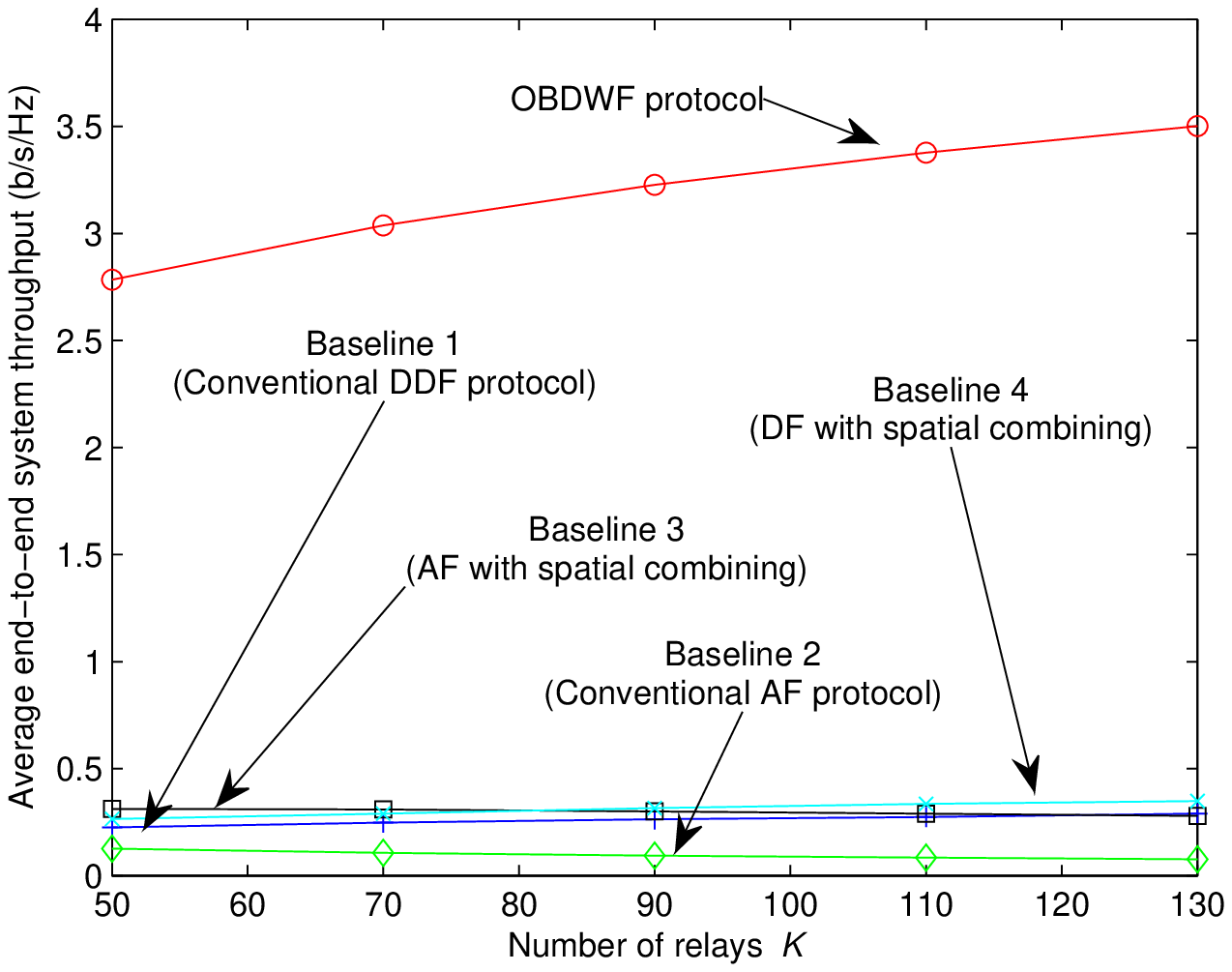}}
 \end{center}
    \caption{
    Average end-to-end system throughput $\overline{T}$ versus the number of relays $K$
    under the random waypoint model \cite{Mobility:1996,Mobility:1998,Mobility:1999}.
    Specifically, a node chooses a random destination distributed uniformly in the coverage area and moves in
    that direction with a random speed chosen uniformly in
    the interval [0.1, 0.6] (per frame). On reaching the destination the node
    pauses for some random time chosen uniformly in
    the set \{0,1,$\cdots$,10\} (frames) and the process repeats itself.  The
physical data rate is $R=W\log_2K$, $(N_A,N_D) = (5,5)$ in the
baseline 3 and 4 respectively.  }
    \label{fig:thrput_new_mobility}
\end{figure}

\section{Conclusions} \label{sec:con}
In this paper, we propose an opportunistic buffered
decode-wait-and-forward (OBDWF) protocol for a point-to-point
communication system with $K$ mobile relays. Unlike conventional DDF
protocol, the proposed OBDWF protocol exploits both the relay
buffering and relay mobility in the systems. We derive closed-form
expressions on the asymptotic system throughput under infinite
backlog as well as the average end-to-end delay under a general
bursty arrival model,  Based on the analysis, we found that there
exists a throughput delay tradeoff in the {\em buffered relay
network}. The system can achieve a higher throughput
$\ThetaL(\log_2K)$ using the proposed OBDWF protocol at the expense
of extra delay. The system mobility affects the tradeoff as below:
\begin{itemize}
\item{\bf Effect on the Throughput/Stability Region Performance:}
According to Theorem 1, the maximal average system throughput of the
proposed OBDWF protocol $\ThetaL(\log_2K)$ is not influenced by the
relay mobility.
\item{\bf Effect on the Delay Performance:}
If the movement of the mobility is fast (large region transition
probability $q$), the chance one relay with source's packet gets
close to the destination is high, leading to small delay in the
relay network, and vice versa. This can be observed from Theorem 3.
\end{itemize}

Finally, theoretical and numerical results demonstrate the
significant performance gains of the proposed OBDWF protocol against
various baseline references.

\appendices

\section*{Appendix A: The Proof of Lemma \ref{lem:relay_decode}}
\label{app:relay_decode} Without loss of generality, we only study
the number of connected links with the source. The connection with
the destination can follow the same approach.


For a given small scale fading realization $H_{s,j}$, the coverage
radius of the source within which the relays have connected links to
the source is given by $d=(P\xi
H_{s,j})^{\frac{1}{\alpha}}/(\beta-1)^{\frac{1}{\alpha}}$.
Therefore, the area of the source's coverage can be approximated as
$\frac{\pi d^2}{2}$, and the probability one relay falls into the
source's coverage is $\frac{d^2}{2R^2}$. Taking consideration of the
random realization of $H_{s,j}$, the probability that one relay
falls into the source's coverage is given by
\begin{equation}
\phi=\int_{0}^{\infty}\frac{d^2}{2R^2}f(H_{s,j})\text{d}H_{s,j}=\ThetaL(\frac{1}{\gamma})
\end{equation}
where $f(H_{s,j})$ is the pdf of the small scale fading gain
$H_{s,j}$.

Suppose there are $X$ relays having connected links with the source,
therefore, $\Pr(X=x)= \binom{K}{x}
\left(\phi\right)^x\left(1-\phi\right)^{K-x}$. $X$ can be treated as
one binomial random variable generated from Bernoulli trail where it
has value 1 with probability $\phi$. In other words, $ X \sim
Binomial(K,\phi)$.

When $\limK \frac{\gamma}{K}=0$, i.e., $\limK \phi K = \infty$, by
law of large number, we have
\begin{equation}
\Pr\left[\phi -o(\phi )<X/K<\phi +o(\phi )\right] \rightarrow 1
\quad \mbox{when} \quad K \rightarrow \infty,
\end{equation}
where $\limK\frac{o(\phi )}{\phi }=0$. Therefore, the $ \Pr\{X
=\ThetaL(\phi K)=\ThetaL(\frac{K}{\gamma}) \}$ tends to 1 when
$K\rightarrow \infty$. This finishes the proof of statement I.

When $\limK \frac{\gamma}{K}>0$, since the probability that one
relay has the connected link with source node is
$\phi=\ThetaL(\frac{1}{\gamma})$, the probability that at least one
relay can receive the transmitted packet is given by $1-(1-\phi)^{K}
= \ThetaL(K/\gamma)$. This finishes the proof of statement II.

\section*{Appendix B: The Proof of Theorem
\ref{thm:avg_thrput}}\label{app:thm_thrput}

{ When $\gamma=\ThetaL(1)$, it is obvious that
$\overline{T}_{\text{OBDWF}}=\ThetaL(1)$, and hence
$\gamma=\ThetaL(1)$ is not the optimal choice. When $\limK
\frac{\gamma}{K}=0$ and $\gamma>\ThetaL(1)$, the coverage radius of
the source and destination nodes can be made sufficiently small
given a fixed transmit power $P$. Only the relays close to the
source node (in region 1) can decode the packet transmitted from the
source node and only the relays close to the destination (in region
$M$) can forward the packet to the destination node.} Furthermore,
there are (with probability $1$) some relays in region 1 can decode
the source's packets by Lemma \ref{lem:relay_decode}. Similarly,
there are (with probability 1) some relays in region $M$ can forward
packets to the destination, and hence the average system throughput
is given by: $\overline{T}_{\text{OBDWF}} \doteq \frac{R}{2} =
\frac{W}{2}\log_2\beta$, where $\doteq$ denotes the equality with
high probability (with probability $1-\frac{1}{K}$ as in
\cite{GamalAE:04}).

When increasing the $\beta$ such that $\limK \frac{\gamma}{K}>0$,
The source should continue to transmit $\ThetaL(\frac{\gamma}{K})$
slots before $\ThetaL(1)$ relays can decode the packet by Lemma
\ref{lem:relay_decode}, and hence the average system throughput is
$\overline{T}_{\text{OBDWF}} =\ThetaL \big( \frac{K\log_2
\beta}{\gamma} \big)$. Since the log function is in the numerator,
increasing the order of $\beta$ will reduce the order of
$\frac{K\log_2 \beta}{\gamma}$. As a result, the optimal $\gamma$ is
$\gamma^*=O(K)$. $\gamma^*$ should be infinitely close to
$\ThetaL(K)$ from below to make that there are always relays in the
connected region of the source/destination node ($\limK
\frac{\gamma}{K}=0$). The corresponding maximal average system
throughput is infinitely close to $\frac{W\alpha}{4} \log_2 K$, when
$K\rightarrow\infty$.

\section*{Appendix C: The Proof of Lemma
\ref{lem:macro-base}}\label{app:macro-base} Obviously, under the
conventional DDF protocol, the average system throughput for a given
data rate $ R = W \log_2 \beta $ is given by
\begin{equation}
\overline{T}_{\text{DDF}} = \ThetaL (\log_2
\beta/\overline{D}_S)=\ThetaL (\log_2 \gamma/\overline{D}_S)
\end{equation}
where $ \overline{D}_S $ is the average service time (number of
frames) for a packet, i.e., the average time spent for the source
node to transmit a packet to the destination node. $\overline{D}_S$
can be divided into two part, i.e.,
$\overline{D}_S=\overline{\rho}+\overline{\eta}$, where
$\overline{\rho}$ is the average time spent for the source node to
transmit a packet to the relay network, and $\overline{\eta}$ is the
average time spent for the relay network to forward the said packet
to the destination. Specifically, we have following Lemma for the
average service time $\overline{D}_S$.
\begin{Lemma}{\em (Average Service Time of Conventional DDF Protocol): }\label{lem:delay_relay_network}
The average service time for a packet under conventional DDF
protocol is given by: $\overline{D}_S = \overline{\rho} +
\overline{\eta} = \ThetaL(\overline{\eta})$, where $\overline{\rho}
= \max\left\{ \ThetaL(1),\ThetaL(\frac{\gamma}{K}) \right\}$, and
\begin{equation}
\label{eq:delay_RS}\begin{array}{l} \overline{\eta} = \\
\left\{
\begin{array}{ll}
\begin{array}{l}\max \big\{ \ThetaL(1), \\
\ThetaL \big[(\frac{\gamma^2}{Kq^{M-1}})^{\frac{1}{M}} \big]
\big\}\end{array}
& \text{If } \limK  \frac{\gamma}{K}=0 \text{ and }\limK\frac{K}{q\gamma^2}>0  \\
\ThetaL (\frac{\gamma^2}{K}) & \text{If } \limK  \frac{\gamma}{K}=0
\text{ and } \limK\frac{K}{q\gamma^2} = 0\\
\ThetaL \left[\left(\gamma/q^{M-1}\right)^{\frac{1}{M}} \right] & \text{If }\limK  \frac{\gamma}{K}>0 \text{ and } \limK\frac{1}{q\gamma}>0 \\
\ThetaL (\gamma) & \text{If }\limK \frac{\gamma}{K}>0 \text{ and }
\limK\frac{1}{q\gamma} = 0
\end{array} \right.
\end{array}
\end{equation}
\end{Lemma}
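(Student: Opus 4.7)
My plan is to decompose $\overline{D}_S = \overline{\rho} + \overline{\eta}$, derive each component separately using Lemma \ref{lem:relay_decode} and a random-walk first-passage analysis, and then show that $\overline{\rho} = O(\overline{\eta})$ across all regimes, so that $\overline{D}_S = \ThetaL(\overline{\eta})$.

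For $\overline{\rho}$, I would apply Lemma \ref{lem:relay_decode} directly to the source-to-relay channel. In each frame, the source succeeds in depositing a packet if at least one relay currently has a connected link to the source. By part I of the lemma, such a relay exists \whp\ when $\limK \gamma/K = 0$, giving $\overline{\rho} = \ThetaL(1)$. By part II, when $\limK \gamma/K > 0$, the per-frame success probability is $\ThetaL(K/\gamma)$, so $\overline{\rho}$ is geometric with mean $\ThetaL(\gamma/K)$. Combining yields $\overline{\rho} = \max\{\ThetaL(1), \ThetaL(\gamma/K)\}$ as claimed.

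For $\overline{\eta}$, the relay(s) that just decoded the source packet are located in or near region $1$ at time zero and must reach a (region, fading) configuration that permits delivery to the destination. Two sources of randomness drive the waiting time: the $M$-state Markov chain governing each relay's region, and the i.i.d.\ fading each frame determining whether a relay in a given region has a connected link to the destination. By Lemma \ref{lem:relay_decode}, a relay sitting in region $M$ has per-frame connection probability $\ThetaL(1/\gamma)$, while relays farther from the destination have vanishing connection probability when the coverage radius is small; the number of relays that simultaneously hold a copy of the packet is $\ThetaL(K/\gamma)$ when $\limK \gamma/K = 0$ and $\ThetaL(1)$ otherwise. For each of the four parameter regimes in the statement, I would compute the smallest $T$ such that the probability that at least one packet-holding relay reaches region $M$ within $T$ frames with a cooperating fading realization is $\ThetaL(1)$. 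The key ingredients are (i) the probability $\ThetaL(q^{M-1})$ of a length-$(M-1)$ shortest-path excursion from region $1$ to region $M$, (ii) the per-frame fading success probability $\ThetaL(1/\gamma)$, (iii) the number of relays $\ThetaL(K/\gamma)$ (or $\ThetaL(1)$) competing in parallel, and (iv) whether the random walk has already mixed by time $T$, which switches the effective visit rate to region $M$ from $\ThetaL(q^{M-1})$ to $\ThetaL(1/M)$. Balancing these against $T$ produces the four cases in \eqref{eq:delay_RS}. A direct comparison then shows $\overline{\rho} = O(\overline{\eta})$ in each regime, giving $\overline{D}_S = \ThetaL(\overline{\eta})$.

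The main obstacle is the coupled random-walk-plus-i.i.d.-fading first-passage analysis. Because the region trajectory is Markov (temporally correlated) while the fading is independent across frames, per-frame success probabilities cannot be naively multiplied: one must set up a recursion on the joint (region, fading) Markov chain, or equivalently sum over excursion paths weighted by their fading-success probabilities, and then take the $K\to\infty$ asymptotic. Particular care will be needed at the regime boundaries where the dominant bottleneck switches between random-walk mixing, rare $q^{M-1}$ excursions, fading, and relay multiplicity, which is precisely where the $1/M$ exponents in \eqref{eq:delay_RS} emerge from an interior optimum that balances these competing effects.
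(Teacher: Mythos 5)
Your proposal follows essentially the same route as the paper's Appendix D: the same decomposition into $\overline{\rho}$ (obtained directly from Lemma \ref{lem:relay_decode}) and $\overline{\eta}$ (obtained by balancing the elapsed time $T$ against the number of packet-holding relays that have reached region $M$ by time $T$ and their $\ThetaL(1/\gamma)$ per-frame connection probability), with the same case split between the pre-mixing (``un-balanced'') and post-mixing (``balanced'') stages of the random walk that yields the $\limK K/(q\gamma^2)$ regime boundary and the $1/M$ exponent from the interior fixed point $\overline{\eta}=\ThetaL(W(\overline{\eta}))$. Your closing caveat about the temporal correlation of the region trajectory is well taken, but the paper's own proof does not resolve it either, so your plan is at the same level of rigor as the published argument.
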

\begin{proof}
please refer to Appendix D.
\end{proof}

Note that $ \overline{D}_S $ is an increase function of $\gamma$ by
Lemma \ref{lem:delay_relay_network}. Since
$\overline{T}_{\text{DDF}}= \ThetaL (\frac{\log_2
\gamma}{\overline{D}_S})$ and due to the $\log$ function in the
numerator, we have $\overline{D}_S(\gamma^*)=\ThetaL(1)$, where
$\gamma^*$ is the optimal value that maximizes
$\overline{T}_{\text{DDF}}$, i.e.,
$\gamma^*=\arg\max_{\gamma}\overline{T}_{\text{DDF}}$. According to
the delay expression in (\ref{eq:delay_RS}), when $\gamma^* = \max
\{\ThetaL(1), \ThetaL(\sqrt{Kq^{M-1}})\}$, we have
$\overline{D}_S(\gamma^*)=\ThetaL(1)$. As a result, if
$\sqrt{Kq^{M-1}}=o(1) $, the optimal value is $ \gamma^*=
\ThetaL(1)$, leading to $ \overline{T}_{\text{DDF}}^* =
\ThetaL(Kq^{M-1})^{1/M}$ . Otherwise, the optimal value is $
\gamma^*= \ThetaL(\sqrt{Kq^{M-1}})$, leading to $
\overline{T}_{\text{DDF}}^* =
 \ThetaL \left(\log_2
(Kq^{M-1}) \right)$.

\section*{Appendix D: The proof of Lemma
\ref{lem:delay_relay_network}}\label{app:delay_relay_network} We
provide the proof in two scenarios, $\limK \frac{\gamma}{K}=0$ and
$\limK \frac{\gamma}{K}>0$, respectively.

\subsection*{A. $\limK \frac{\gamma}{K}=0$ Scenario}
When the source broadcasts a packet, there are
$\mathbf{\Theta}(\frac{K}{\gamma})$ relays can decode this packet
with probability 1 by Lemma \ref{lem:relay_decode}, and hence
$\overline{\rho}=\ThetaL(1)$. The movement of relays with the said
packet can be divided in to two stage: (1) un-balanced, which means
the order of relays with the said packet in each region is not the
same; (2) balanced, which means the order of relays with the said
packet in each region is the same. Obviously, after
$\mathbf{\Theta}(1/q)$ frames, the system is balanced, i.e., there
are $\mathbf{\Theta}(\frac{K}{\gamma})$ relays with the said packet
in each of the $M$ regions.


When the connected link mainly happens in the un-balanced stage,
after $\overline{\eta}=O(1/q)$ frames, the number of relays with the
said packet in the region $ M $ is $N_M= \frac{K}{\gamma}
(q\overline{\eta})^{M-1} $. The chance that these relays have
connected link with the destination is $ \min \left\{ 1,
\ThetaL(\frac{N_M}{\gamma})\right\} $. It can be obtained following
the same approach as Lemma \ref{lem:relay_decode}. Therefore, after
$ W(\overline{\eta}) = \max \left\{ 1,
\ThetaL(\frac{\gamma}{N_M})\right\} $ frames, there is at least one
relay with the said packet having connected link to the destination
node. Increasing the order of $\overline{\eta} $, the number of
relays with the said packet in the region $ M $ increases, but $
W(\overline{\eta}) $ decrease. The actual delay should satisfy:
$\overline{\eta} = \ThetaL\left( W(\overline{\eta}) \right)$, which
leads to
\begin{equation}
\label{eq:eta} \overline{\eta} = \max \left\{ \ThetaL(1), \ThetaL
\left[(\frac{\gamma^2}{Kq^{M-1}})^{1/M} \right] \right\}
\end{equation}
As a result, the requirement of $\overline{\eta}=O(1/q)$ is
satisfied when $ \limK\frac{K}{q\gamma^2}>0 $.

When the connected link mainly happens in the balanced stage, thus
$\overline{\eta}>\ThetaL(1/q)$, i.e., $\limK
q\overline{\eta}=\infty$. This requirement is satisfied when $
\limK\frac{K}{q\gamma^2}=0 $. In this case, the average delay is
mainly due to the waiting time after the relays' movement is
balanced, given by $ \overline{\eta} = \ThetaL (\frac{\gamma^2}{K})
$.

\subsection*{B. $\limK \frac{\gamma}{K}>0$ Scenario}
When $\limK \frac{\gamma}{K}>0$,  the source should continue to
transmit $\ThetaL(\frac{\gamma}{K})$ slots before $\ThetaL(1)$ relay
can decode the packet by Lemma \ref{lem:relay_decode}, i.e.,
$\overline{\rho}=\ThetaL(\frac{\gamma}{K})$. Therefore, there are
$\ThetaL(1)$ relays with the packet in the relay network rather than
$\ThetaL(\frac{K}{\gamma})$ as the $\limK \frac{\gamma}{K}=0$
scenario. Following the similar approach as in the above subsection
by replacing $\ThetaL(\frac{K}{\gamma})$ with $\ThetaL(1)$, it can
be shown that the average delay is given by Lemma
\ref{lem:delay_relay_network}.

\section*{Appendix E: The Proof of the Theorem
\ref{thm:stability}}\label{app:stability} In this proof, we shall
first study the stability region of the source buffer $Q_s(t)$, and
then prove that under the same stability region, the relay buffers
$\{Q_1,\cdots,Q_K\}$ are stable too.

From \cite{Stability:1999,Stability:2005}, the queueing system is
stable, if and only if the average arrival rate $\lambda_s$ is
smaller than the service rate $1/b$, i.e., $\lambda_s b<1$, where
$b$ is the average service time to server a queueing packet out of
the queueing buffer. For the source buffer $Q_s$, the average
service time is the average number of frames for the source node to
transmit a packet to the relay network, denoted as $\zeta$. From
Lemma \ref{lem:relay_decode}, we can discuss in the following two
scenarios.

If $\limK\frac{\gamma}{K}=0$, when the source broadcasts a packet,
there are $\mathbf{\Theta}(\frac{K}{\gamma})$ relays can decode this
packet with probability 1. Therefore, $\zeta=\ThetaL(1)$.

If $\limK\frac{\gamma}{K}>0$, the source should continue to transmit
$\ThetaL(\frac{\gamma}{K})$ slots before $\ThetaL(1)$ relays can
decode the packet. Therefore, $\zeta=\ThetaL(\frac{\gamma}{K})$.

Note that the $K$ queues $\{Q_1(t),\cdots,Q_K(t)\}$ are
statistically identical, and they are either all stable or all
unstable. Consider a fictitious queueing system with
$Q_r(t)=\sum_{k=1}^{K}Q_k(t)$ with the average arrival rate
$\lambda_r$ and service rate $1/b_r$. Obviously, $Q_k(t) \leq
Q_r(t),\forall k$ and hence, all the relay queues
$\{Q_1(t),...,Q_k(t)\}$ are dominated by the fictitious queue
$Q_r(t)$. The average arrival rate of the fictitious queue is
$\lambda_r=\min\{\lambda_s,\frac{1}{\zeta}\}$, and the average
number of frames for the fictitious system to deliver a packet to
the destination $b_r<\zeta$. Therefore, if
$\lambda_s<\frac{1}{\zeta}$, then $\lambda_r<\frac{1}{b_r}$. In
other words, the queues of the relay nodes are also stable if the
queue $Q_s(t)$ is stable.

\section*{Appendix F: The Proof of the Theorem
\ref{thm:avg_delay}}\label{app:avg_delay} Note that
$\overline{D}_{\text{OBDWF}}=\overline{D}_Q+\overline{D}_R$, where
$\overline{D}_Q$ is the average queueing delay in the source buffer
before transmitted to the relay network, and $\overline{D}_R$ is the
average waiting time in the relay network. Following the same
approach as the proof of Lemma \ref{lem:delay_relay_network}, it is
easy to verify that $\overline{D}_R=\ThetaL(\overline{\eta})$ given
in (\ref{eq:delay_RS}). The remaining task is to find
$\overline{D}_Q$, which is discussed below.

Let $X$ be the number of frames (namely service time) to transmit a
packet into the relay network, and denote $b=\mathbb{E}[X]$ and
$b^{(2)}=\mathbb{E}[X^2]$ respectively. Furthermore, let $n_i$
denotes the number of packets in the source buffer $Q_s$ immediately
after transmitting the $i$-th packet to the relay network. $a_i$ is
the number of packets arriving during the service time of the $i$-th
packet, $\overline{a}$ is the number of packets arriving in one
frame given that $\overline{a}\geq1$, i.e.,
$\Pr\{\overline{a}\}=\Pr\{A_s=\overline{a}|\overline{a}\geq1\}$, and
$\widetilde{a}_{i+1}$ is the number of packets arriving during the
service time of the $(i+1)$-th packet minus one frame (The number of
arrivals during the last $m-1$ frames if the service time of the
$(i+1)$-th packet is $m$ frames). Then $n_i$ will form a Markov
chain with the following transitions.
\begin{equation}
\label{eq:n_i} n_{i+1}=\left\{\begin{array}{ll}
\overline{a}-1+\widetilde{a}_{i+1}
&\text{if } n_i = 0 \\
n_i-1+a_{i+1} &\text{otherwise}
\end{array}\right.
\end{equation}
Specifically, the probability generating function (p.g.f)
$\overline{A}(z)$ of $\overline{a}$ is given by
\begin{equation}
\label{eq:A_z}\begin{array}{lll}
\overline{A}(z)&=&\sum\nolimits_{\overline{a}=1}^{\infty}\Pr\{A_s=\overline{a}|\overline{a}\geq1\}z^{\overline{a}}\\
&=&
\sum\nolimits_{\overline{a}=1}^{\infty}\frac{\Pr\{A_s=\overline{a}\}z^{\overline{a}}}{1-P_0}=\frac{\Lambda(z)-P_0}{1-P_0}
\end{array}
\end{equation}
where $P_0$ is the probability that no packets arrives.
$\Lambda(z)=\sum_{A_s=0}^{\infty}\Pr\{A_s\}z^{A_s}$ is the p.g.f of
the bursty arrival $A_s$. The p.g.f $A(z)$ of the number $a$
arriving within a service time is
\begin{eqnarray}
\label{eq:A_z_1}\begin{array}{lll}
A(z)&=&\sum_{a=0}^{\infty}\Pr\{a\}z^a\\
&=&\sum_{l=1}^{\infty}\Pr\{X=l\}\sum_{a=0}^{\infty}\Pr\{a|X=l\}z^a\\
&=&\sum_{l=1}^{\infty}\Pr\{X=l\}[\Lambda(z)]^l
\end{array}
\end{eqnarray}
Similarly, the p.g.f $\widetilde{A}(z)$ of packet arrival
$\widetilde{a}_i$ is given by
\begin{equation}
\label{eq:A_z_2}
\widetilde{A}(z)=\sum\nolimits_{l=1}^{\infty}\Pr\{X=l\}[\Lambda(z)]^{l-1}=B(\Lambda(z))/\Lambda(z)
\end{equation}
where $B(z)=\sum_{l=1}^{\infty}\Pr\{X=l\}z^l$ is the p.g.f of the
service time. From (\ref{eq:n_i})-(\ref{eq:A_z_2}), the p.g.f
$P_n(z)$ of the number in the system immediately after the service
completion instants is\cite{Bose:2002}
\begin{equation}
P_n(z)=\frac{(1-\lambda_s
b)[1-\Lambda(z)]B(\Lambda(z))}{\lambda_s\Lambda(z)[B(\Lambda(z))-z]}
\end{equation}

It is shown in \cite{Bruneel:1993} that the p.g.f $P(z)$ for the
packets in the queueing system immediately after an arbitrary frame
is given by
\begin{equation}
\begin{array}{lll}
P(z)=P_n(z)\frac{\lambda_s(1-z)\Lambda(z)}{1-\Lambda(z)}=\frac{(1-\lambda_s
b)(1-z)B(\Lambda(z))}{[B(\Lambda(z))-z]}
\end{array}
\end{equation}
Therefore, by Little's law\cite{Bose:2002}, the average time a
packet spends in the buffer will be
\begin{equation}
\label{eq:delay_expression}
\overline{D}_Q=\frac{\lambda_s^2b^{(2)}-\lambda_s^2b-\lambda_s
b+\lambda_s^{(2)}b}{2\lambda_s(1-\lambda_s b)}+b
\end{equation}

Note that, to make the system stable, the arrival rate $\lambda_s$
should be smaller than the service rate $1/b$, i.e., $\lambda_s
b<1$. It agrees with the stability condition given in
\cite{Stability:1999,Stability:2005}.

If $\limK\frac{\gamma}{K}=0$, when the source broadcasts a packet,
there are $\mathbf{\Theta}(\frac{K}{\gamma})$ relays can decode this
packet with probability 1 by Lemma \ref{lem:relay_decode}.
Therefore, $b=\ThetaL(1)$ and $b^{(2)}=\ThetaL(1)$.

If $\limK\frac{\gamma}{K}>0$, the first and second order moments of
the service time for a packet to enter the relay network is
$b=\ThetaL(\frac{\gamma}{K})$ and
$b^{(2)}=\ThetaL(\frac{\gamma^2}{K^2})$ respectively from Lemma
\ref{lem:relay_decode}.

Let $\zeta=\max\left\{\ThetaL(1),\ThetaL(\frac{\gamma}{K})\right\}$,
and note that $\frac{\lambda_s}{\lambda_s^{(2)}}=O(1)$. The average
end-to-end packet delay is given by
\begin{eqnarray}\begin{array}{lll}
&&\overline{D}_{\text{OBDWF}}=\overline{D}_Q+\overline{D}_R\\
&=&\max\left\{
\ThetaL\left(\frac{\lambda_s^{(2)}\zeta}{\lambda_s(1-\lambda_s\zeta)}\right),
\ThetaL(\overline{D}_S)\right\}
\end{array}
\end{eqnarray}

\section*{Appendix G: The Proof of The Lemma
\ref{lem:avg_delay_base}}\label{app:avg_delay_base} From Appendix C,
the average service time to server a packet to the destination node
from the source node is
$b=\overline{D}_S=\mathbb{E}[D_S]=\ThetaL(\overline{D}_S)$. In the
followings, we shall prove that the seconder order moment of service
time is given by $b^{(2)}=\ThetaL\left((\overline{D}_S)^2\right)$.
Specifically,
\begin{eqnarray}\begin{array}{lll}
b^{(2)}&=&\mathbb{E}[D_S^{2}]=\sum_{D_S}\Pr\{D_S\}D_S^{2}\\
&=&\ThetaL\left[(\overline{D}_S)^2+f^-(K)+f^+(K)\right]
\end{array}
\end{eqnarray}
where $f^{-}(K)$ (in terms of $K$) is contributed by the
possibilities that the service time $D_S=o(\overline{D}_S) $, and
$f^{+}(K)$ is contributed by the possibilities that the service time
$D_S > \ThetaL(\overline{D}_S)$. Clearly, $ f^{-}(K) $ is
neglectable as $f^{-}(K)= o\left((\overline{D}_S)^2\right) $, i.e,
$b^{(2)}=\ThetaL\left[(\overline{D}_S)^2+f^+(K)\right]$.

We first consider the scenario where $\limK\frac{\gamma}{K}=0$ and
$\limK\frac{K}{q\gamma^2}<\infty$. In this case, the average service
time is given by $b=\overline{D}_S=\ThetaL(\frac{\gamma^2}{K})$, and
after $\ThetaL(\frac{\gamma^2}{K})$ frames, there are
$\ThetaL(\frac{K}{\gamma})$ relays with the transmitted packet in
the region $M$ from Lemma \ref{lem:delay_relay_network}. The
probability that one relay with the packet has connected link to the
destination is $\ThetaL(\frac{K}{\gamma^2})$. Denote $g(K)$ as the
service time that is order-wisely larger than $\frac{\gamma^2}{K}$,
i.e., $g(K)>\ThetaL(\frac{\gamma^2}{K})$ and the contribution of
$g(K)$ in $b^{(2)}$ is given by
\begin{eqnarray}\begin{array}{lll}
&\Pr\{D_S=g(K)\}g^2(K)\\
=&\ThetaL\left((1-K/\gamma^2)^{g(K)}g^2(K)\right)
=o\left((\gamma^2/K)^2\right)
\end{array}
\end{eqnarray}
As a result,
$b^{(2)}=\ThetaL\left((\gamma^2/K)^2\right)=\ThetaL\left((\overline{D}_S)^2\right)$.

In other scenarios, we can follow the same steps. The only
difference is that after $\overline{D}_S$ frames, the average number
of relays with the transmitted packet in the region $M$  is not
$\ThetaL(\frac{K}{\gamma})$ anymore. Specifically, it is given in
the proof of Lemma \ref{lem:delay_relay_network}.

Given $b=\ThetaL(\overline{D}_S)$ and
$b^{(2)}=\ThetaL\left((\overline{D}_S)^2\right)$, following the same
steps as in Appendix F, the average end-to-end packet delay for the
conventional DDF protocol is given by
\begin{eqnarray}\begin{array}{l}
\overline{D}_{\text{DDF}}=\max\left\{\ThetaL\left(\frac{(\lambda_s^{(2)}/\lambda_s-1)\overline{D}_S}{1-\lambda_s\overline{D}_S}\right),
\ThetaL\left(\frac{\overline{D}_S}{1-\lambda_s\overline{D}_S}\right)\right\}\\
=\ThetaL\left(\frac{\lambda_s^{(2)}\overline{D}_S}{\lambda_s(1-\lambda_s\overline{D}_S)}\right)
\end{array}
\end{eqnarray}

\bibliographystyle{IEEEtran}
\bibliography{IEEEabrv,ray,huang}

\end{document}